\documentclass[11pt]{article}

\usepackage[a4paper,margin=27mm]{geometry}
\usepackage{amsmath,amssymb,amsthm,mathtools}
\usepackage{booktabs}
\usepackage{graphicx}
\usepackage{microtype}
\usepackage[section]{placeins}
\usepackage[numbers,sort&compress]{natbib}
\usepackage[colorlinks=true,citecolor=blue!55!black,linkcolor=blue!55!black,urlcolor=blue!55!black]{hyperref}
\usepackage{xcolor}

\newtheorem{theorem}{Theorem}[section]
\newtheorem{proposition}[theorem]{Proposition}
\newtheorem{corollary}[theorem]{Corollary}
\newtheorem{lemma}[theorem]{Lemma}
\theoremstyle{definition}

\theoremstyle{remark}

\newcommand{\R}{\mathbb{R}}
\newcommand{\1}{\mathbf{1}}
\newcommand{\diag}{\operatorname{diag}}

\newcommand{\spec}{\operatorname{spec}}
\newcommand{\cF}{\mathcal{F}}
\newcommand{\eps}{\varepsilon}

\title{Collective Hysteresis and Multistability in Threshold Networks}
\author{Moses Boudourides \\[0.5em]
{\small School of Professional Studies, Northwestern University, Evanston, IL, USA} \\
{\small \texttt{Moses.Boudourides@northwestern.edu}}}
\date{}

\begin{document}
\maketitle

\begin{abstract}
In a mechanistic model of the dawn chorus, Kaye showed that heterogeneous activation thresholds and a shared feedback signal determined by the population's active fraction can produce abrupt collective activation and hysteresis. We extend Kaye's mechanism to a network of interacting agents. Each node represents an agent with a continuous activation level, and a nonnegative row-stochastic matrix determines how node activities contribute to each agent's feedback. We prove that weak feedback gives a unique globally attracting equilibrium. For every feedback strength, the homogeneous dynamics reproduce Kaye's scalar equation exactly. Network topology therefore changes neither the folds nor the cusp of this branch, and no heterogeneous mode becomes unstable before the homogeneous mode. In an equitable partition, nodes in the same block receive the same total weight from every block, giving an exact quotient system. With no coupling between blocks, this quotient consists of independent copies of Kaye's scalar equation, and each block can occupy any stable scalar equilibrium. We prove that this combinatorial multistability persists under sufficiently weak interblock coupling. Every assignment of stable scalar states to blocks persists as a stable quotient equilibrium, lifts to a stable full-network equilibrium, and survives small perturbations that break exact equitability. For two symmetrically coupled blocks, branches with unequal block activities terminate at two symmetry-related cusp bifurcations. For feedback strengths just above the onset of bistability in Kaye's scalar model, we derive scaling laws for (i) the interblock coupling at which the heterogeneous cusp bifurcations occur and stable unequal-activity states disappear, (ii) the difference between the activity levels of the two blocks at those bifurcations, and (iii) the displacement of the external stimulus from its value at Kaye's scalar cusp. Computations that track these cusps as feedback varies confirm the scaling laws for gamma, logistic, and normal distributions of individual activation thresholds.
\end{abstract}

\noindent\textbf{Keywords:} threshold networks; collective hysteresis; equitable partitions; quotient networks; cusp bifurcation; multistability

\section{Introduction}

Threshold models connect heterogeneous individual decisions to abrupt collective change. In the classical formulation, individuals act once the active fraction exceeds a personal threshold \citep{granovetter1978}; network versions show how local exposure and topology control global cascades \citep{watts2002}. Smooth and stochastic variants additionally produce saddle-node transitions and hysteresis \citep{wiedermann2020,kook2021}. These mechanisms are distinct from networks assembled from prescribed tipping-point normal forms, where the local bistability is assumed before coupling and the principal question is cascade propagation \citep{kroenke2020,klose2020}.

Motivated by the dawn chorus, Kaye recently derived collective hysteresis from a stochastic activation--deactivation process in a population of birds with heterogeneous thresholds \citep{kaye2026}. In the baseline model, every bird receives the same feedback signal, determined by the overall active fraction, rather than feedback transmitted through an explicit interaction network. Under mean-field scaling, the active proportion $x$ obeys
\begin{equation}
 \dot x=-x+F(\alpha u+\beta x),
 \label{eq:kaye-scalar}
\end{equation}
where $u$ is the external stimulus, $\alpha$ is its sensitivity coefficient, $\beta$ is the strength of social feedback, and $F$ is the cumulative distribution function of individual thresholds. A nondegenerate maximum of the threshold density generates a cusp and hence a wedge of bistability. Kaye also develops a spatial continuum extension with normalized interaction kernels and travelling waves. Our setting differs from Kaye's spatial extension: interactions are specified by row-stochastic matrices, and equitable partitions provide exact quotient systems. We use this structure to study network equilibria, multistability, and bifurcations.

Community effects in discrete random threshold networks have been studied numerically, with modularity changing attractor counts and perturbation spreading \citep{wang2013}. Such topological communities, usually characterized by comparatively dense internal connectivity, need not satisfy the equal-input condition required for an equitable partition and exact quotient dynamics. Nathe et al. instead define ``dynamical communities'' through nearly equitable aggregate inputs and detect statistically significant examples in empirical weighted technological, biological, and social networks \citep{nathe2022}.

Exact quotient reductions through equitable partitions are established for consensus, synchronization, and epidemic dynamics \citep{schaub2016,siddique2018,bonaccorsi2015}. Exact symmetry-induced orbit partitions also occur in empirical networks \citep{macarthur2008}, but exact equitability can yield partitions that are too fine for empirical role analysis, motivating $\varepsilon$-equitable and approximate-equitable methods \citep{kate2009,squillace2026}. In studies of Laplacian dynamics, an ``almost equitable partition'' satisfies an exact condition on interblock weights and should not be confused with tolerance-based approximate equitability \citep{schaub2016,bonaccorsi2015}. Weakly coupled cusp systems can support many attractors \citep{izhikevich1998}. These results do not determine how row-stochastic network coupling changes Kaye's model. In particular, they do not show which features of the scalar cusp persist, when weak interblock coupling produces multistability, or how the cusps at which unequal block states disappear approach the scalar cusp.

This paper answers those questions for a deterministic threshold-response network. 
We use multistability to mean the coexistence of several asymptotically stable states at the same parameter values \citep{feudel2008}. Equitable reduction provides the framework for the block-level analysis but is not itself our main result. We first prove that sufficiently weak feedback gives a unique globally attracting equilibrium on every row-stochastic network. We then show that network topology cannot cause a heterogeneous mode to become unstable before the homogeneous mode. Our main multistability result shows that independent assignments of stable scalar equilibria to equitable blocks persist under sufficiently weak interblock coupling, yielding a combinatorial family of stable full-network equilibria. For two symmetrically coupled blocks, we show that branches with unequal block activities end at two symmetry-related cusp bifurcations. Near the onset of scalar bistability, we derive explicit scaling laws for the interblock coupling at which these bifurcations occur, the difference between the two block activities, and the corresponding shift in the external stimulus.

In our deterministic network extension of Kaye's mean-field equation, each node represents an agent with a continuous activation level. Interactions are described by a row-stochastic matrix, so the feedback received by each agent is a weighted average of the node activities. Each activation level responds to this feedback and the external stimulus according to Kaye's cumulative threshold distribution. We do not derive the network equation from stochastic dynamics of binary agents; such a derivation would require a separate limit theorem.

\section{Threshold dynamics on a row-stochastic network}

Consider a network of $N$ interacting agents, with node set $V=\{1,\ldots,N\}$. Node $i$ represents agent $i$ and carries an activation level $x_i\in[0,1]$. The network may be directed and weighted: $w_{ij}$ measures the contribution of agent $j$'s activity to the feedback received by agent $i$. Collecting these weights in $W=(w_{ij})\in\R^{N\times N}$, we assume that $W$ is nonnegative and row-stochastic:
\begin{equation}
 w_{ij}\ge0,
 \qquad
 W\1_N=\1_N.
 \label{eq:row-stochastic}
\end{equation}
Here $\1_N$ denotes the vector of $N$ ones. Thus each row of $W$ sums to one, so every agent receives unit total incoming feedback weight. Let $F:\R\to[0,1]$ be the cumulative distribution function of the activation thresholds. Whenever $F$ is differentiable, we denote its density by $f=F'$. For a constant stimulus $u\in\R$, stimulus sensitivity $\alpha>0$, and feedback strength $\beta\ge0$, the weighted activity contributing to the feedback received by agent $i$ is $(Wx)_i=\sum_{j=1}^N w_{ij}x_j$. The network dynamics are
\begin{equation}
 \dot x=G(x;u,\beta)
 :=-x+\cF(\alpha u\1_N+\beta Wx),
 \qquad x\in[0,1]^N,
 \label{eq:network-model}
\end{equation}
where $\cF$ acts componentwise. Throughout, $\|\cdot\|_\infty$ denotes the infinity norm: for $v\in\R^n$,
$\|v\|_\infty=\max_i|v_i|$, and for $A\in\R^{p\times q}$,
$\|A\|_\infty=\max_i\sum_j|a_{ij}|$.

\begin{proposition}[Well-posed cooperative dynamics]
\label{prop:wellposed}
If $F\in C^1(\R)$, then \eqref{eq:network-model} has a unique global solution for each initial condition in $[0,1]^N$. The cube is forward invariant, at least one equilibrium exists, and the flow is order preserving.
\end{proposition}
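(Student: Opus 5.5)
The plan is to check the four claims in turn: local well-posedness, forward invariance of the cube (which gives global existence), existence of an equilibrium, and monotonicity of the flow via the Kamke condition.

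\emph{Local existence and uniqueness.} Since $F\in C^1(\R)$, the map $x\mapsto G(x;u,\beta)$ is $C^1$ on all of $\R^N$. It is the sum of the linear map $-x$ and the composition of the affine map $x\mapsto\alpha u\1_N+\beta Wx$ with the componentwise $C^1$ map $\cF$. Hence $G$ is locally Lipschitz, and Picard--Lindel\"of gives a unique maximal solution through every initial point.

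\emph{Forward invariance and global existence.} I will check the boundary behaviour of the vector field. Because $F$ takes values in $[0,1]$, at any point with $x_i=0$ we have $\dot x_i=F(\cdot)\ge0$. At any point with $x_i=1$ we have $\dot x_i=-1+F(\cdot)\le0$. This tangency condition gives invariance of the closed convex cube, but the nonstrict inequalities need care. I would handle this by comparison on each coordinate. Writing $g_i(t)=F(\alpha u+\beta(Wx(t))_i)\in[0,1]$, the variation of constants formula gives
\[
x_i(t)=e^{-t}x_i(0)+\int_0^t e^{-(t-s)}g_i(s)\,ds .
\]
This is a convex combination of $x_i(0)$ and values in $[0,1]$, so it lies in $[0,1]$ as long as the solution exists. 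In fact the same formula holds for any initial condition in $\R^N$ and bounds the solution. Bounded solutions of a locally Lipschitz system cannot blow up in finite time, so solutions are global.

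\emph{Existence of an equilibrium.} Equilibria are exactly the fixed points of the continuous map $T(x)=\cF(\alpha u\1_N+\beta Wx)$, and $T$ maps the compact convex set $[0,1]^N$ into itself. Brouwer's fixed point theorem therefore gives an equilibrium.

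\emph{Order preservation.} The Jacobian of $G$ is
\[
DG(x)=-I+\beta\,\diag\bigl(f(\alpha u+\beta(Wx)_i)\bigr)W .
\]
Its off-diagonal entries are $\beta f(\cdot)w_{ij}\ge0$, because $f\ge0$, $\beta\ge0$ and $w_{ij}\ge0$. So the system is cooperative, and it is defined on the convex (in particular p-convex) set $[0,1]^N$. By the Kamke--M\"uller theorem (Smith's monotone systems framework), $x(0)\le y(0)$ componentwise implies $x(t)\le y(t)$ for all $t\ge0$.

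This last step is where some care is needed. If one prefers a self-contained argument, I would compare $y$ with the solution of the perturbed system $\dot y=G(y)+\eps\1_N$, started at $y(0)+\eps\1_N$. Strict inequality then holds initially and cannot fail first in any coordinate, because at such a first time $t$ the nonnegative off-diagonal derivatives force $\dot y_i(t)-\dot x_i(t)\ge\eps>0$. Letting $\eps\to0$ and using continuous dependence on initial data and parameters gives the nonstrict ordering. The perturbed solution may leave the cube, so this argument relies on $G$ being defined and $C^1$ on all of $\R^N$.
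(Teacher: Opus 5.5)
Your proof is correct and follows the same four-step outline as the paper's: local Lipschitz continuity, invariance of the cube from the signs of $G_i$ on the faces $x_i=0$ and $x_i=1$, Brouwer's theorem applied to $T$, and a Metzler Jacobian giving cooperativity. The only differences are that you prove invariance with the variation-of-constants formula instead of the paper's ``points inward'' remark (which, with nonstrict inequalities, really needs a Nagumo-type argument), and you prove the Kamke comparison with an $\eps$-perturbation instead of citing Smith's monotone-systems theory; both make the paper's brief steps fully rigorous.
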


\begin{proof}
The vector field is locally Lipschitz. On a face $x_i=0$ one has $G_i=F(\cdot)\ge0$, whereas on $x_i=1$ one has $G_i=-1+F(\cdot)\le0$. Hence the vector field points inward on the boundary of the cube, which is therefore forward invariant. Boundedness on this compact invariant set gives global continuation. The map
$T(x)=\cF(\alpha u\1_N+\beta Wx)$ maps the cube continuously into itself, so Brouwer's theorem gives a fixed point. Finally,
\[
 \frac{\partial G_i}{\partial x_j}
 =\beta f(\alpha u+\beta(Wx)_i)w_{ij}\ge0
 \qquad(i\ne j),
\]
so the Jacobian is Metzler and the flow is cooperative \citep{smith1995}.
\end{proof}

\begin{theorem}[Subcritical contraction]
\label{thm:contraction}
Assume that $F\in C^1(\R)$ and that $L_f:=\sup_{z\in\R}f(z)<\infty$. If $\beta L_f<1$, then \eqref{eq:network-model} has a unique equilibrium $x^*$. Every solution satisfies
\begin{equation}
 \|x(t)-x^*\|_\infty
 \le e^{-(1-\beta L_f)t}\|x(0)-x^*\|_\infty.
 \label{eq:global-contraction}
\end{equation}
This conclusion requires neither irreducibility nor the existence of an equitable partition.
\end{theorem}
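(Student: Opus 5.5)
The plan is to treat \eqref{eq:network-model} as a perturbation of the linear decay $\dot x=-x$ by the map $T(x)=\cF(\alpha u\1_N+\beta Wx)$. I would show that $T$ is a contraction in $\|\cdot\|_\infty$ with constant $\beta L_f$, and then compare two trajectories directly. For any $x,y\in\R^N$, the mean value theorem applied componentwise gives
\[
 |T_i(x)-T_i(y)|\le L_f\,\beta\,|(W(x-y))_i|\le \beta L_f\sum_j w_{ij}|x_j-y_j|\le \beta L_f\|x-y\|_\infty .
\]
This uses $w_{ij}\ge0$ and $\sum_j w_{ij}=1$ from \eqref{eq:row-stochastic}, so $\|W\|_\infty=1$. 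Only the row sums enter, so irreducibility and equitable partitions play no role. Since $\beta L_f<1$, equilibria (the fixed points of $T$) are unique. Existence is already given by Proposition~\ref{prop:wellposed}, or alternatively by Banach's theorem on the cube. Call the fixed point $x^*$.

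For the decay estimate, let $x(t)$ be the solution from Proposition~\ref{prop:wellposed} and set $e(t)=x(t)-x^*$. Then
\[
 \dot e=-e+\bigl(T(x)-T(x^*)\bigr).
\]
The main technical point is differentiating the nonsmooth function $\|e(t)\|_\infty$. I would avoid this with variation of constants:
\[
 e(t)=e^{-t}e(0)+\int_0^t e^{-(t-s)}\bigl(T(x(s))-T(x^*)\bigr)\,ds .
\]
Taking norms gives
\[
 \phi(t):=e^t\|e(t)\|_\infty\le \|e(0)\|_\infty+\beta L_f\int_0^t\phi(s)\,ds,
\]
and Gr\"onwall's inequality yields $\phi(t)\le e^{\beta L_f t}\|e(0)\|_\infty$. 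That is exactly \eqref{eq:global-contraction}.

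As an alternative route, one could use the upper Dini derivative. At an index $i$ attaining the maximum $|e_i|$, one has $D^+\|e\|_\infty\le -\|e\|_\infty+\beta L_f\|e\|_\infty$. This also shows that the logarithmic norm of the Jacobian $-I+\beta\,\diag(f)W$ in $\|\cdot\|_\infty$ is at most $-1+\beta L_f$.

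I expect no real obstacle. The only care needed is the justification of the norm derivative, which the integral formulation handles cleanly. I would also note that the estimate holds for any initial condition in $[0,1]^N$, and in fact in $\R^N$, since the Lipschitz bound is global.
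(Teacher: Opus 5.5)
Your proposal is correct and follows the paper's proof essentially step for step. Both use the infinity-norm contraction of $T$ with constant $\beta L_f$ (which needs only $\|W\|_\infty=1$), a fixed-point argument for the unique equilibrium, and then variation of constants, multiplication by $e^t$, and Gr\"onwall; your componentwise mean-value bound and the remark that the estimate holds on all of $\R^N$ just make explicit what the paper leaves implicit.
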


\begin{proof}
Because $\|W\|_\infty=1$, the equilibrium map
$T(x)=\cF(\alpha u\1_N+\beta Wx)$ is a contraction of $[0,1]^N$ in the infinity norm, with Lipschitz constant at most $q=\beta L_f<1$. Banach's theorem gives a unique fixed point $x^*$. Variation of constants gives
\[
 x(t)-x^*=e^{-t}[x(0)-x^*]
 +\int_0^t e^{-(t-s)}\{T(x(s))-T(x^*)\}\,ds.
\]
Taking infinity norms, multiplying by $e^t$, and applying Gronwall's inequality yields \eqref{eq:global-contraction}.
\end{proof}

\subsection{Equitable partitions and exact quotients}

Let $\mathcal P=\{C_1,\ldots,C_m\}$ be a partition of the network nodes into nonempty blocks. Its indicator matrix $H\in\{0,1\}^{N\times m}$ is defined by $H_{ia}=1$ if and only if $i\in C_a$. Thus $H\1_m=\1_N$ and $H$ has full column rank.

\begin{lemma}[Characterization of equitable partitions]
\label{lem:equitable}
The partition $\mathcal P$ is equitable for $W$, meaning that for every pair of blocks $C_a,C_b$ the sum
\begin{equation}
 \sum_{j\in C_b}w_{ij}
 \label{eq:block-input}
\end{equation}
is independent of the choice of $i\in C_a$, if and only if there exists an $m\times m$ matrix $B$ such that
\begin{equation}
 WH=HB.
 \label{eq:equitable}
\end{equation}
The matrix $B$ is unique and is given by
\begin{equation}
 B=(H^TH)^{-1}H^TWH.
 \label{eq:quotient-matrix}
\end{equation}

\noindent In that case,
\[
 B_{ab}=\sum_{j\in C_b}w_{ij}
 \qquad \text{for any } i\in C_a,
\]
so $B_{ab}$ is the total weight received by a node in block $C_a$ from the nodes in block $C_b$. If $W$ is nonnegative and row-stochastic, then so is $B$. This is the standard adjacency-matrix characterization of an equitable partition and its quotient matrix \citep{godsil2001,schaub2016}.
\end{lemma}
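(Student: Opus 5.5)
The plan is to compute $WH$ entrywise and compare it with $HB$ column by column. For any $i\in V$ and any block index $b$,
\[
 (WH)_{ib}=\sum_{j=1}^N w_{ij}H_{jb}=\sum_{j\in C_b}w_{ij},
\]
so the $(i,b)$ entry of $WH$ is exactly the block input \eqref{eq:block-input}. On the other side, if $i\in C_a$ then $(HB)_{ib}=\sum_c H_{ic}B_{cb}=B_{ab}$, since row $i$ of $H$ is the $a$-th standard basis vector. Thus $WH=HB$ holds if and only if $\sum_{j\in C_b}w_{ij}=B_{ab}$ for every $i\in C_a$ and every $a,b$.

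Both directions of the equivalence follow from this. If the partition is equitable, I define $B_{ab}$ as the common value of \eqref{eq:block-input} over $i\in C_a$. This is well defined because each block is nonempty and the value does not depend on $i$. The entrywise identity then gives $WH=HB$. Conversely, if some $B$ satisfies \eqref{eq:equitable}, the identity shows that the block input equals $B_{ab}$ for all $i\in C_a$. Hence it is independent of $i$, and the formula $B_{ab}=\sum_{j\in C_b}w_{ij}$ comes out at the same time.

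For uniqueness and \eqref{eq:quotient-matrix}, I use that $H$ has full column rank, which holds because its columns have disjoint nonempty supports. Consequently $H^TH=\diag(|C_1|,\ldots,|C_m|)$ is invertible. Multiplying $WH=HB$ on the left by $(H^TH)^{-1}H^T$ gives $B=(H^TH)^{-1}H^TWH$, so any such $B$ is forced to equal this matrix. Equivalently, $HB=HB'$ implies $B=B'$ by injectivity of $H$.

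Finally, suppose $W$ is nonnegative and row-stochastic. Nonnegativity of $B$ is immediate from $B_{ab}=\sum_{j\in C_b}w_{ij}\ge0$. For row sums, fix $i\in C_a$; then
\[
 \sum_b B_{ab}=\sum_b\sum_{j\in C_b}w_{ij}=\sum_j w_{ij}=1,
\]
because the blocks partition $V$. Alternatively, $HB\1_m=WH\1_m=W\1_N=\1_N=H\1_m$, and injectivity of $H$ gives $B\1_m=\1_m$. No step is a real obstacle. The only point needing care is that block nonemptiness is used twice: for $B$ to be well defined, and for $H^TH$ to be invertible.
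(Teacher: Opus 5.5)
Your proposal is correct and follows essentially the same route as the paper's proof: the entrywise identities $(WH)_{ib}=\sum_{j\in C_b}w_{ij}$ and $(HB)_{ib}=B_{ab}$ for $i\in C_a$, the left inverse $(H^TH)^{-1}H^T$ for uniqueness, the block-sum interpretation for nonnegativity, and $HB\1_m=W\1_N=H\1_m$ with injectivity of $H$ for row-stochasticity. Your direct row-sum computation and the remark on where nonemptiness of the blocks is used are harmless additions.
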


\begin{proof}
For $i\in C_a$,
\[
 (WH)_{ib}=\sum_{j\in C_b}w_{ij},
 \qquad
 (HB)_{ib}=B_{ab}.
\]
Hence \eqref{eq:equitable} holds if and only if \eqref{eq:block-input} is constant over $i\in C_a$ for every $a,b$. Because $H$ has full column rank, multiplying \eqref{eq:equitable} by the left inverse $(H^TH)^{-1}H^T$ gives \eqref{eq:quotient-matrix}, proving uniqueness. Nonnegativity of $B$ follows from its block-sum interpretation. Finally,
\[
 HB\1_m=WH\1_m=W\1_N=\1_N=H\1_m,
\]
and full column rank of $H$ implies $B\1_m=\1_m$.
\end{proof}

\begin{theorem}[Exact quotient reduction]
\label{thm:quotient}
Suppose $\mathcal P$ is equitable for $W$, with quotient matrix $B$ from Lemma \ref{lem:equitable}. Then the block-constant subspace
\[
 \mathcal S_H=\{Hy:y\in\R^m\}
\]
is invariant under \eqref{eq:network-model}, and the restricted dynamics are exactly
\begin{equation}
 \dot y=-y+\cF(\alpha u\1_m+\beta By).
 \label{eq:quotient-model}
\end{equation}
\end{theorem}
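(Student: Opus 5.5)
The plan is to verify directly that the vector field $G$ maps the subspace $\mathcal S_H$ into itself, and then read off the restricted dynamics. The only earlier result needed is Lemma \ref{lem:equitable}.

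First, take $x=Hy\in\mathcal S_H$. By \eqref{eq:equitable}, $Wx=WHy=HBy$, and since $H\1_m=\1_N$ the argument of $\cF$ is
\[
 \alpha u\1_N+\beta Wx=H(\alpha u\1_m+\beta By).
\]
Second, we use the key observation that componentwise application of $\cF$ commutes with $H$: $\cF(Hz)=H\cF(z)$ for every $z\in\R^m$. This holds because each row of $H$ has exactly one entry equal to $1$, so $(Hz)_i=z_a$ for $i\in C_a$, and hence $F((Hz)_i)=F(z_a)=(H\cF(z))_i$. Consequently
\[
 G(Hy;u,\beta)=H\bigl(-y+\cF(\alpha u\1_m+\beta By)\bigr)\in\mathcal S_H .
\]
Thus the vector field is tangent to $\mathcal S_H$ at every point of the subspace.

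Third, we turn tangency into invariance. Let $y(t)$ solve \eqref{eq:quotient-model} with $y(0)=y_0$. Its right-hand side is locally Lipschitz, and $B$ is row-stochastic by Lemma \ref{lem:equitable}. The argument of Proposition \ref{prop:wellposed} therefore applies verbatim to $B$, giving global existence and forward invariance of $[0,1]^m$. The identity above shows that $x(t):=Hy(t)$ satisfies \eqref{eq:network-model} with $x(0)=Hy_0$. By uniqueness of solutions (Proposition \ref{prop:wellposed}, or simply local Lipschitz continuity), this is \emph{the} solution from $Hy_0$, so $\mathcal S_H$ is invariant. Since $H$ has full column rank, the coordinate $y$ is uniquely determined by $x=Hy$. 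The restricted dynamics are therefore exactly \eqref{eq:quotient-model}, and they can be recovered explicitly as $y=(H^TH)^{-1}H^Tx$.

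No step is a real obstacle. The only point requiring care is the commutation $\cF(Hz)=H\cF(z)$, together with the use of uniqueness rather than mere tangency to conclude invariance. Both are immediate from the structure of $H$ and the Lipschitz property of the vector field.
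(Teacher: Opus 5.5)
Your proposal is correct and follows the paper's proof: both use $WHy=HBy$ and the commutation $\cF(Hz)=H\cF(z)$ to obtain $G(Hy;u,\beta)=H\{-y+\cF(\alpha u\1_m+\beta By)\}\in\mathcal S_H$. The paper stops at this tangency identity, whereas you also spell out the uniqueness argument that turns tangency into invariance; that step is a harmless addition, and local existence of the quotient solution is already enough for it.
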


\begin{proof}
If $x=Hy$, then $WHy=HBy$. Componentwise application of the same scalar function commutes with replication by $H$, namely $\cF(Hv)=H\cF(v)$. Therefore
\[
 G(Hy;u,\beta)
 =H\{-y+\cF(\alpha u\1_m+\beta By)\}\in\mathcal S_H,
\]
which proves invariance and the quotient equation.
\end{proof}

The blocks of an equitable partition are input-equivalence classes: nodes in the same block receive the same aggregate weight from every block. They need not be communities in the usual modular sense of dense within-group and sparse between-group connectivity \citep{nathe2022}.

Lemma \ref{lem:equitable} and Theorem \ref{thm:quotient} are structural: they apply to directed or weighted networks and require neither symmetry nor equal block sizes. Figure \ref{fig:quotient} illustrates the exact reduction on a fifteen-node network with three unequal equitable blocks. Full-network trajectories initialized in $\mathcal S_H$ agree with the lifted quotient trajectories to numerical integration tolerance.

\begin{figure}[t]
 \centering
 \includegraphics[width=0.68\linewidth]{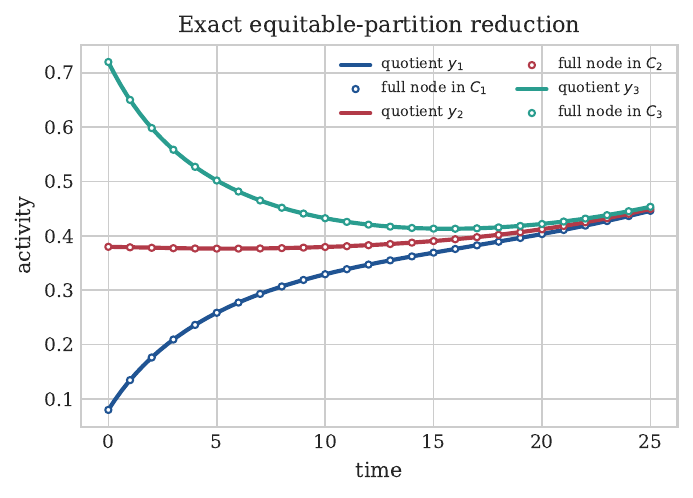}
 \caption{Exact quotient dynamics on a network with block sizes $4$, $5$, and $6$. Curves solve the three-dimensional quotient system; open circles show one representative node from each block in the fifteen-dimensional system. The computed identity error $\|WH-HB\|_\infty$ is $6.94\times10^{-18}$ and the maximum trajectory discrepancy is $3.65\times10^{-11}$.}
 \label{fig:quotient}

 \smallskip
 \noindent\textbf{Alt text:} Three curves show the quotient trajectories for a fifteen-node network with equitable blocks of sizes 4, 5, and 6. Open circles show the corresponding full-network trajectory for one representative node from each block and lie on top of the quotient curves, illustrating exact quotient reduction.
\end{figure}

\section{Homogeneous dynamics and spectral stability}

Because $W\1_N=\1_N$, the one-block partition is equitable for every row-stochastic $W$, so Theorem \ref{thm:quotient} applies to the homogeneous subspace.

\begin{corollary}[Scalar embedding]
\label{cor:scalar}
For every row-stochastic $W$, the homogeneous set $\{s\1_N:s\in[0,1]\}$ is invariant and carries exactly the scalar equation \eqref{eq:kaye-scalar}. Hence its equilibrium branch, folds, and cusp are independent of network topology.
\end{corollary}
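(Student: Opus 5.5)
The plan is to apply Theorem \ref{thm:quotient} to the trivial one-block partition $\mathcal P=\{V\}$. Its indicator matrix is $H=\1_N\in\R^{N\times 1}$. First I check equitability through Lemma \ref{lem:equitable}: row-stochasticity \eqref{eq:row-stochastic} gives $W\1_N=\1_N=\1_N\cdot 1$, so \eqref{eq:equitable} holds with the $1\times1$ quotient matrix $B=1$. This agrees with \eqref{eq:quotient-matrix}, since $B=(\1_N^T\1_N)^{-1}\1_N^TW\1_N=N^{-1}\cdot N=1$. Theorem \ref{thm:quotient} then shows that $\mathcal S_H=\{s\1_N:s\in\R\}$ is invariant. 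It also shows that the restricted dynamics are $\dot s=-s+F(\alpha u+\beta s)$, which is exactly \eqref{eq:kaye-scalar} with $x=s$.

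Next I restrict to $s\in[0,1]$. The homogeneous set $\{s\1_N:s\in[0,1]\}$ is the intersection of the invariant line $\mathcal S_H$ with the cube. The cube is forward invariant by Proposition \ref{prop:wellposed}, so this intersection is forward invariant. I can also argue directly: the scalar equation has $\dot s\ge0$ at $s=0$ and $\dot s\le0$ at $s=1$, so $[0,1]$ is invariant for it. By uniqueness of solutions, the lifted scalar trajectory is the full-network trajectory.

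For the topology-independence claim, the homogeneous equilibria are the points $s^*\1_N$ with $s^*=F(\alpha u+\beta s^*)$. This condition involves only $F$, $\alpha$, $\beta$ and $u$, and not $W$. So the equilibrium branch in $(u,s)$ is the same for every row-stochastic $W$. Folds and the cusp of this branch are defined by the scalar conditions $1=\beta f(\alpha u+\beta s)$ and $f'(\alpha u+\beta s)=0$, together with the corresponding nondegeneracy conditions. These are properties of the scalar vector field $g(s;u,\beta)=-s+F(\alpha u+\beta s)$ and its derivatives, so they too are independent of $W$.

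The only point needing care is interpretation. The folds and cusp here are bifurcations of the branch restricted to the invariant line, as the statement says. They are not claims about full-network stability, which is a later question. If one also wants the full Jacobian $-I+\beta f(\alpha u+\beta s^*)W$ to be singular at a scalar fold, this follows because $W\1_N=\1_N$ makes $\1_N$ an eigenvector with eigenvalue $-1+\beta f=0$ there. Nothing beyond the restriction to the invariant line is needed, so I expect no real obstacle.
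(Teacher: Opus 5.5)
Your proposal is correct and follows the paper's route. The paper also observes that $W\1_N=\1_N$ makes the one-block partition equitable and applies Theorem \ref{thm:quotient} to the homogeneous subspace; your remarks on forward invariance of the segment $s\in[0,1]$ and on the $W$-independence of the fold and cusp conditions make explicit what the paper leaves implicit.
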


A homogeneous equilibrium has the form $x^*=s^*\1_N$, where
\begin{equation}
 s^*=F(z^*),
 \qquad
 z^*=\alpha u+\beta s^*.
 \label{eq:hom-eq}
\end{equation}
Its Jacobian is
\begin{equation}
 J_*=-I_N+aW,
 \qquad
 a=\beta f(z^*).
 \label{eq:hom-jacobian}
\end{equation}

A perturbation proportional to $\1_N$ is homogeneous. We call an eigenmode of $J_*$ that is not proportional to $\1_N$ heterogeneous.

\begin{theorem}[Spectral ordering at the homogeneous branch]
\label{thm:spectral}
Let $W$ be row-stochastic. A homogeneous equilibrium is asymptotically stable if $a<1$ and unstable if $a>1$. At $a=1$, zero is semisimple and
\[
 \ker J_*=\ker(I_N-W).
\]
If $W$ is irreducible, this kernel is $\operatorname{span}\{\1_N\}$, so zero is simple and every heterogeneous mode has strictly negative real part. These conclusions do not require aperiodicity. If $W$ is reducible, zero can have multiplicity greater than one and heterogeneous neutral directions can occur at the scalar fold, but no mode becomes unstable before it.
\end{theorem}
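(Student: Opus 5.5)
The whole argument reduces to the spectrum of $W$. Since $J_*=-I_N+aW$ with $a=\beta f(z^*)\ge0$, its eigenvalues are exactly $-1+a\lambda$ for $\lambda\in\spec(W)$, and its Jordan structure matches that of $W$ (for $a>0$; the case $a=0$ gives $J_*=-I_N$, trivially stable). Because $W$ is nonnegative and row-stochastic, $\|W\|_\infty=1$, so every eigenvalue satisfies $|\lambda|\le1$, and $\lambda=1$ occurs with eigenvector $\1_N$. Then $\operatorname{Re}(-1+a\lambda)\le -1+a|\lambda|\le -1+a$. For $a<1$ every eigenvalue of $J_*$ has negative real part, which gives asymptotic stability by linearization. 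For $a>1$ the homogeneous eigenvalue $-1+a>0$ gives instability.

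At $a=1$, $J_*=-(I_N-W)$, so $\ker J_*=\ker(I_N-W)$ is immediate. For semisimplicity of zero I would show that $\lambda=1$ is a semisimple eigenvalue of $W$. Since $\|W^k\|_\infty=1$ for all $k$, the powers of $W$ are bounded. A nontrivial Jordan block for an eigenvalue of modulus one would make $\|W^k\|$ grow like $k$, a contradiction. Hence every peripheral eigenvalue of $W$, in particular $1$, is semisimple, and zero is semisimple for $J_*$. This boundedness argument is where I expect the only real care to be needed, since it must avoid any appeal to irreducibility or primitivity.

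If $W$ is irreducible, Perron--Frobenius says the eigenvalue $1=\rho(W)$ is algebraically simple, with eigenvector $\1_N$, so the kernel is $\operatorname{span}\{\1_N\}$. For any other eigenvalue $\lambda\ne1$ with $|\lambda|\le1$, we have $\operatorname{Re}\lambda<1$: if $\operatorname{Re}\lambda=1$ then $|\lambda|\le1$ forces $\lambda=1$. Thus $\operatorname{Re}(-1+\lambda)<0$ for every heterogeneous mode. Peripheral eigenvalues $e^{2\pi i k/p}$ arising from periodicity still have real part less than one, so aperiodicity is not needed.

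In the reducible case, $\ker(I_N-W)$ may have dimension greater than one; for example, if $W$ is block diagonal with two closed classes, each class carries its own constant vector. These extra neutral directions are heterogeneous. The bound $\operatorname{Re}(-1+a\lambda)\le -1+a$ holds for all $\lambda$ and is attained by the homogeneous mode $\lambda=1$. So no eigenvalue of $J_*$ crosses into the right half-plane while $a<1$, and at most it reaches zero simultaneously with the homogeneous mode at $a=1$. This gives ``no mode becomes unstable before'' the scalar fold. It remains to note that $a=1$ is exactly the fold condition $\beta f(z^*)=1$ of \eqref{eq:hom-eq}, as Corollary \ref{cor:scalar} provides.
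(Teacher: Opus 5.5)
Your proposal is correct and follows the paper's proof essentially step for step. The shared steps are the spectral correspondence $\lambda\mapsto-1+a\lambda$ with $|\lambda|\le1$, instability from the homogeneous eigenvalue $-1+a$, semisimplicity of the eigenvalue $1$ from boundedness of $\|W^n\|_\infty$, Perron--Frobenius simplicity under irreducibility, and $\operatorname{Re}\lambda<1$ for peripheral $\lambda\ne1$; your reducible block-diagonal example and the remark that $J_*$ inherits the Jordan structure of $W$ when $a>0$ are minor additions, not a different route.
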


\begin{proof}
If $\lambda\in\spec(W)$, then $-1+a\lambda\in\spec(J_*)$. Every eigenvalue of a row-stochastic matrix satisfies $|\lambda|\le1$, while $W\1_N=\1_N$. Hence, if $a<1$,
$\operatorname{Re}(-1+a\lambda)\le-1+a<0$; if $a>1$, the eigenvalue $-1+a$ is positive. At $a=1$, the zero eigenspace is $\ker(I_N-W)$. The eigenvalue $1$ of $W$ is semisimple: otherwise a nontrivial Jordan block would make $\|W^n\|_\infty$ grow without bound, contradicting $\|W^n\|_\infty=1$. If $W$ is irreducible, Perron--Frobenius theory makes the eigenvalue $1$ simple \citep{seneta2006}. Periodicity may place other eigenvalues on the unit circle, but any such $\lambda\ne1$ has $\operatorname{Re}\lambda<1$, so its Jacobian eigenvalue still has negative real part.
\end{proof}

Thus, under row-stochastic coupling, no heterogeneous mode becomes unstable before the scalar fold. Irreducibility is needed only to make the critical direction uniquely homogeneous. If the row-stochastic normalization is removed, the homogeneous set need not remain invariant when the row sums differ, and both the scalar reduction and the spectral ordering can fail.

For a smooth density with a strict nondegenerate maximum at $z_c$,
\begin{equation}
 f(z_c)>0,
 \qquad f'(z_c)=0,
 \qquad f''(z_c)<0,
 \label{eq:cusp-density}
\end{equation}
the scalar cusp occurs at
\begin{equation}
 x_c=F(z_c),
 \qquad
 \beta_c=\frac{1}{f(z_c)},
 \qquad
 u_c=\frac{z_c-\beta_c x_c}{\alpha}.
 \label{eq:scalar-cusp}
\end{equation}
These are the standard scalar cusp conditions \citep{kaye2026,kuznetsov2004}. Figure \ref{fig:scalar} shows the benchmark used below.

\begin{figure}[t]
 \centering
 \includegraphics[width=\linewidth]{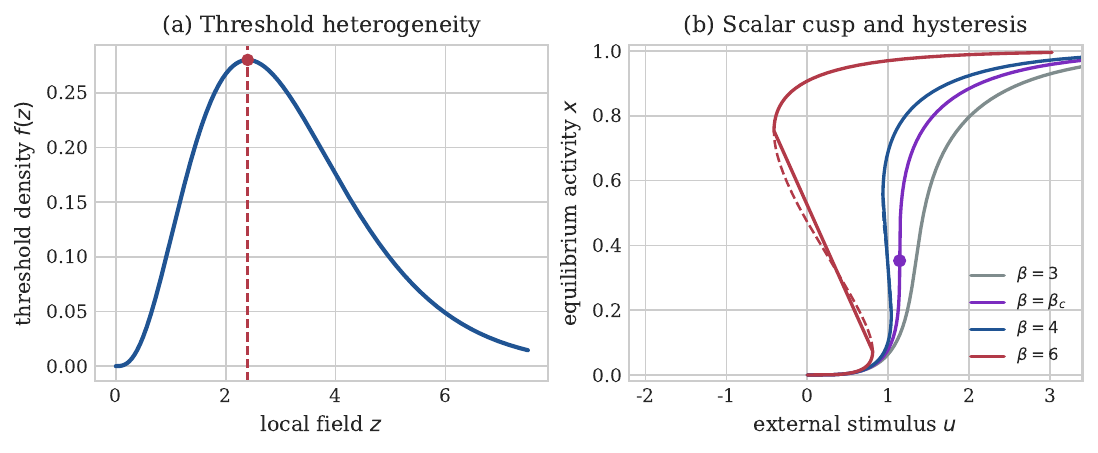}
 \caption{Scalar benchmark for a Gamma threshold distribution with shape $4$ and scale $0.8$. (a) The density has a nondegenerate maximum at $z_c=2.4$. (b) At $\beta=\beta_c=3.570762$, the two folds meet at the cusp; for $\beta>\beta_c$, the branch contains a bistable interval and exhibits hysteresis. Solid and dashed segments are stable and unstable, respectively; the dot marks the cusp.}
 \label{fig:scalar}

  \smallskip
 \noindent\textbf{Alt text:} Panel (a) shows a unimodal gamma threshold density as a function of the local field, with its maximum marked by a dot and a vertical dashed line. Panel (b) shows scalar equilibrium activity versus external stimulus for several feedback strengths; a cusp marks the onset of bistability, and solid and dashed branches indicate stable and unstable equilibria.
\end{figure}

\section{Combinatorial multistability under weak interblock coupling}

Consider a $C^1$ family of nonnegative row-stochastic quotient matrices $B(\eps)$ satisfying
\begin{equation}
 B(\eps)=I_m+\eps L+O(\eps^2),
 \qquad \eps\downarrow0.
 \label{eq:weak-family}
\end{equation}
Here $\eps$ measures the strength of interblock coupling. Since $B(\eps)\1_m=\1_m$, we have $L\1_m=0$. At $\eps=0$, $B(0)=I_m$, and the quotient system consists of $m$ independent copies of Kaye's scalar equation.

\begin{theorem}[Persistence and counting]
\label{thm:persistence}
Fix $(u,\beta)$ and suppose the scalar equation \eqref{eq:kaye-scalar} has $r$ distinct hyperbolic asymptotically stable equilibria $s_1,\ldots,s_r$. Each assignment $\sigma=(\sigma_1,\ldots,\sigma_m)\in\{1,\ldots,r\}^m$ defines the product equilibrium
\[
 y^0_\sigma=(s_{\sigma_1},\ldots,s_{\sigma_m}).
\]
There exists $\eps_0>0$ such that every product equilibrium extends uniquely to a $C^1$ branch $y_\sigma(\eps)$ of asymptotically stable equilibria for $0\le\eps<\eps_0$. Hence the quotient system has at least $r^m$ stable equilibria throughout this interval. In a scalar bistable interval, $r=2$, so the count is at least $2^m$.
\end{theorem}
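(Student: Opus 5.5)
The plan is an implicit function theorem argument around each product equilibrium at $\eps=0$. Define the quotient vector field
\[
 \Phi(y,\eps)=-y+\cF\bigl(\alpha u\1_m+\beta B(\eps)y\bigr),
\]
which is $C^1$ in $(y,\eps)$ on a neighbourhood of $[0,1]^m\times[0,\eps_1)$. We assume $F\in C^1$, and $B$ is a $C^1$ family by \eqref{eq:weak-family}; to apply the implicit function theorem at the endpoint $\eps=0$, extend $B$ to a $C^1$ family on $(-\eps_1,\eps_1)$, for instance by $B(-\eps):=2B(0)-B(\eps)$, or any $C^1$ extension. At $\eps=0$ we have $B(0)=I_m$, so $\Phi(\cdot,0)$ decouples into $m$ copies of the scalar field $g(s)=-s+F(\alpha u+\beta s)$. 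Hence $\Phi(y^0_\sigma,0)=0$ for every assignment $\sigma$.

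The Jacobian there is diagonal:
\[
 D_y\Phi(y^0_\sigma,0)=\diag\bigl(g'(s_{\sigma_1}),\ldots,g'(s_{\sigma_m})\bigr),
 \qquad g'(s)=-1+\beta f(\alpha u+\beta s).
\]
Hyperbolic asymptotic stability of each $s_k$ means $g'(s_k)<0$. So this matrix is invertible with all eigenvalues negative.

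The implicit function theorem then gives, for each $\sigma$, a neighbourhood $U_\sigma$ of $y^0_\sigma$, a number $\eps_\sigma>0$, and a unique $C^1$ branch $y_\sigma(\eps)\in U_\sigma$ with $\Phi(y_\sigma(\eps),\eps)=0$ for $|\eps|<\eps_\sigma$. Uniqueness is understood within $U_\sigma$. Stability persists because eigenvalues depend continuously on the matrix entries and $D_y\Phi(y_\sigma(\eps),\eps)$ is continuous in $\eps$. Shrinking $\eps_\sigma$ therefore keeps every eigenvalue in the open left half-plane, and linearization gives asymptotic stability.

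There are finitely many assignments, namely $r^m$, so set $\eps_0=\min_\sigma\eps_\sigma>0$. Distinctness of the $r^m$ branches also needs care, and this is the one genuinely nonroutine step. The points $s_1,\ldots,s_r$ are distinct, so the points $y^0_\sigma$ are pairwise distinct. Choose the neighbourhoods $U_\sigma$ pairwise disjoint, e.g.\ balls of radius less than $\tfrac12\min_{k\ne l}|s_k-s_l|$ in the infinity norm. By continuity, shrinking $\eps_0$ if necessary keeps $y_\sigma(\eps)\in U_\sigma$, so the branches never coincide. This yields at least $r^m$ stable quotient equilibria for $0\le\eps<\eps_0$.

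It remains to check that the branches stay in the physical cube, so that they are equilibria of the model on $[0,1]^m$. Any zero of $\Phi$ satisfies $y=\cF(\cdot)\in[0,1]^m$, so this is automatic. The case $r=2$ is the stated $2^m$ count.

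The main obstacle is the bookkeeping at the boundary $\eps=0$ of the parameter interval, together with uniformity across the $r^m$ assignments. The first is handled by the $C^1$ extension of $B$. The second is handled by finiteness together with the disjoint neighbourhoods chosen above.
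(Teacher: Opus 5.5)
Your proposal is correct and follows the paper's own route. You apply the implicit-function theorem at each product state using the diagonal Jacobian with entries $-1+\beta f(\alpha u+\beta s_{\sigma_a})<0$, use continuity of eigenvalues for stability, and choose disjoint neighbourhoods in advance to keep the $r^m$ branches distinct; your extra details (the $C^1$ extension of $B$ to $\eps<0$ so the theorem applies at the endpoint, the explicit separation radius, and the observation that every zero of the vector field lies in $[0,1]^m$) are valid refinements of steps the paper leaves implicit.
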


\begin{proof}
At $\eps=0$, the equilibrium equation separates. Its state Jacobian at $y^0_\sigma$ is diagonal, with entries
\[
 d_a=-1+\beta f(\alpha u+\beta s_{\sigma_a})<0.
\]
It is therefore invertible. The implicit-function theorem gives a unique local equilibrium branch through each product state. Eigenvalues depend continuously on matrix entries, so the branches remain asymptotically stable for sufficiently small $\eps$. Since the product equilibria are distinct, disjoint neighbourhoods can be chosen before applying the theorem, preserving the count.
\end{proof}

\begin{proposition}[Full-network stability of a lifted quotient equilibrium]
\label{prop:lift-stability}
Let $WH=HB$, and let $y^*$ be an equilibrium of the quotient system \eqref{eq:quotient-model}. Define
\[
 q_a=f\!\left(\alpha u+\beta(By^*)_a\right).
\]
Then $x^*=Hy^*$ is an equilibrium of the full network. If
\begin{equation}
 \beta\max_{1\le a\le m}q_a<1,
 \label{eq:transverse-condition}
\end{equation}
then $x^*$ is asymptotically stable against arbitrary perturbations in $\R^N$, including perturbations transverse to $\mathcal S_H$.
\end{proposition}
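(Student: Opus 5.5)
The plan is to show first that $x^*$ is an equilibrium, and then to prove stability by bounding the full Jacobian in the infinity-norm logarithmic norm. This avoids any explicit spectral decomposition into quotient and transverse modes.

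\emph{Equilibrium.} By Theorem \ref{thm:quotient}, $G(Hy;u,\beta)=H\{-y+\cF(\alpha u\1_m+\beta By)\}$. At $y=y^*$ the bracket vanishes, so $G(x^*;u,\beta)=0$.

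\emph{Jacobian.} Since $WHy^*=HBy^*$, we have $(Wx^*)_i=(By^*)_a$ for every $i\in C_a$. The local field at node $i$ is therefore $\alpha u+\beta(By^*)_a$, and
\[
 J=-I_N+\beta D W,
 \qquad
 D=\diag(q_{a(i)})_{i=1}^N,
\]
where $a(i)$ denotes the block containing $i$. This is the step where equitability matters. It makes the diagonal gain matrix $D$ block-constant with entries exactly the $q_a$, so that \eqref{eq:transverse-condition} controls every row.

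\emph{Stability.} I will use the infinity-norm logarithmic norm
\[
 \mu_\infty(J)=\max_i\Big(J_{ii}+\sum_{j\ne i}|J_{ij}|\Big).
\]
Every entry of $\beta DW$ is nonnegative, and row $i$ sums to $\beta q_{a(i)}$ because $W$ is row-stochastic. Hence
\[
 \mu_\infty(J)\le-1+\beta\max_a q_a<0.
\]
Every eigenvalue $\lambda$ of $J$ satisfies $\operatorname{Re}\lambda\le\mu_\infty(J)$. Equivalently, $|{-1}+\beta q_{a(i)}\lambda_W|$ estimates do not apply directly because $DW$ is not a scalar multiple of $W$, but Gershgorin's theorem applied row by row gives the same bound: each disc is centred at $-1+\beta q_{a(i)}w_{ii}$ with radius $\beta q_{a(i)}(1-w_{ii})$, so it lies in $\{\operatorname{Re}z\le-1+\beta q_{a(i)}\}$. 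Consequently the whole spectrum lies in the open left half-plane. By linearization, $x^*$ is asymptotically stable against arbitrary perturbations in $\R^N$, including those transverse to $\mathcal S_H$.

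The step requiring the most care is justifying the row bound, and I will present the argument via Gershgorin. One might have expected to need a separate analysis of transverse eigenvalues. The Gershgorin bound makes that unnecessary, at the cost of a sufficient condition that is stronger than quotient stability alone.
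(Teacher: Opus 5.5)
Your proof is correct and takes essentially the same route as the paper: both write the lifted Jacobian as $-I_N+\beta D_*W$ with block-constant gains and bound its spectrum through the row sums $\beta q_{c(i)}$ of the nonnegative matrix $\beta D_*W$. The paper states this bound as $\rho(\beta D_*W)\le\|\beta D_*W\|_\infty<1$, and your logarithmic-norm and Gershgorin versions are equivalent forms of the same row-sum estimate.
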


\begin{proof}
The equilibrium statement follows from Theorem \ref{thm:quotient}. At the lift,
\[
 J_N=-I_N+\beta D_*W,
 \qquad
 D_*=\diag(q_{c(1)},\ldots,q_{c(N)}),
\]
where $c(i)$ is the block containing node $i$. Since $D_*W$ is nonnegative and its $i$th row sum is $q_{c(i)}$,
$\rho(\beta D_*W)\le\|\beta D_*W\|_\infty=\beta\max_a q_a<1$. Every eigenvalue of $J_N$ therefore has negative real part.
\end{proof}

\begin{corollary}[Full-network form of combinatorial multistability]
\label{cor:full-persistence}
Suppose a family $W(\eps)$ has a fixed equitable partition $H$ and quotient $B(\eps)$ satisfying \eqref{eq:weak-family}. Under the assumptions of Theorem \ref{thm:persistence}, the $r^m$ lifted equilibria $Hy_\sigma(\eps)$ are asymptotically stable in the full $N$-node system for all sufficiently small $\eps\ge0$.
\end{corollary}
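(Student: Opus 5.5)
The plan combines Theorem~\ref{thm:persistence} with Proposition~\ref{prop:lift-stability}. First, Theorem~\ref{thm:quotient} applies for every $\eps$, because $W(\eps)H=HB(\eps)$ with the fixed indicator matrix $H$. Hence each $Hy_\sigma(\eps)$ is an equilibrium of the full $N$-node system. Theorem~\ref{thm:persistence} supplies the $r^m$ quotient branches $y_\sigma(\eps)$. They are $C^1$ in $\eps$ on $[0,\eps_0)$ and pairwise distinct, since they start at distinct product states. Because $H$ has full column rank, the lifts are also pairwise distinct, so the count transfers to the full network.

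Next, I will verify the transverse condition \eqref{eq:transverse-condition} along each branch. At $\eps=0$ we have $B(0)=I_m$, so $(B(0)y^0_\sigma)_a=s_{\sigma_a}$. Therefore
\[
 \beta q_a(0)=\beta f(\alpha u+\beta s_{\sigma_a}) = 1+d_a<1 ,
\]
where $d_a<0$ is the diagonal entry of the quotient Jacobian at $y^0_\sigma$ computed in the proof of Theorem~\ref{thm:persistence}. This is where hyperbolic stability of the scalar equilibria enters: for the scalar equation, asymptotic stability with hyperbolicity is exactly $\beta f(z^*)<1$.

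The map $\eps\mapsto \alpha u\1_m+\beta B(\eps)y_\sigma(\eps)$ is continuous, because $B$ is $C^1$ and $y_\sigma$ is $C^1$. Since $f$ is continuous (as $F\in C^1$), each $q_a(\eps)$ is continuous in $\eps$. So $\beta\max_a q_a(\eps)<1$ persists on some interval $[0,\eps_\sigma)$. Taking $\eps_1=\min_\sigma \eps_\sigma$ over the finitely many assignments, Proposition~\ref{prop:lift-stability} gives asymptotic stability of every lift $Hy_\sigma(\eps)$ in $\R^N$ for $0\le\eps<\min(\eps_0,\eps_1)$.

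The only real point to check is that stability in the quotient does not already imply full-network stability: transverse modes are governed by $\beta D_*W$, not by $\beta D_*B$. The argument does not rely on the quotient eigenvalues. Instead it uses the row-sum bound of Proposition~\ref{prop:lift-stability}, which requires the strict pointwise inequality $\beta q_a<1$ in every block. That inequality holds strictly at $\eps=0$ precisely because $B(0)=I_m$ decouples the blocks, and continuity then carries it to small $\eps$. I expect no further obstacle. The corollary needs no uniformity in $N$ or in $W(\eps)$ beyond the fixed $H$, because the bound $\|D_*W\|_\infty=\max_a q_a$ holds for any row-stochastic $W$.
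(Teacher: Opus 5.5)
Your proposal is correct and follows the paper's proof essentially step for step. Scalar hyperbolic stability gives $\beta f(\alpha u+\beta s_{\sigma_a})<1$ at $\eps=0$ because $B(0)=I_m$, continuity of the quotient branches and their local fields carries \eqref{eq:transverse-condition} uniformly over the finitely many assignments to small $\eps$, and Proposition~\ref{prop:lift-stability} then gives full-network stability; your extra remarks (distinct lifts via full column rank of $H$, and no need for continuity of $W(\eps)$ since only the row-sum bound is used) are accurate refinements, not a different route.
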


\begin{proof}
For $j=1,\ldots,r$, let $z_j=\alpha u+\beta s_j$. At $\eps=0$, hyperbolic stability of each selected scalar equilibrium gives
$\beta f(z_{\sigma_a})<1$ for every block. 
The quotient branches and their local fields depend continuously on $\eps$, so \eqref{eq:transverse-condition} holds uniformly over the finitely many branches for sufficiently small $\eps$. Proposition \ref{prop:lift-stability} completes the proof.
\end{proof}

\begin{proposition}[Persistence after breaking equitability]
\label{prop:approx-equitable}
Let $W_0$ be row-stochastic and equitable for $H$, and let $x_0=Hy^*$ be a hyperbolic equilibrium of the full network. For every row-stochastic $W$ with $\|W-W_0\|_\infty$ sufficiently small, there is a unique equilibrium $x^*(W)$ near $x_0$, depending $C^1$-smoothly on $W$. If $x_0$ is asymptotically stable, then so is $x^*(W)$, and
\begin{equation}
 \|x^*(W)-x_0\|_\infty=O(\|W-W_0\|_\infty).
 \label{eq:approx-equitable-bound}
\end{equation}
Exact equality of node activities within blocks need not survive when $WH=HB$ is broken.
\end{proposition}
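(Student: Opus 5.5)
The plan is to apply the implicit-function theorem to the equilibrium map, with the coupling matrix $W$ treated as a parameter. Define
\[
 \Phi(x,W)=-x+\cF(\alpha u\1_N+\beta Wx),
\]
viewed as a map on $\R^N\times\R^{N\times N}$. Fix $(u,\beta)$ and equip the parameter space with the norm $\|\cdot\|_\infty$. Since $F\in C^1$, the map $\Phi$ is $C^1$ jointly in $(x,W)$: the entries of $W$ enter only through the bilinear term $Wx$, and this term is composed with the $C^1$ function $F$. At the base point we have $\Phi(x_0,W_0)=0$, because $x_0=Hy^*$ is an equilibrium by Theorem~\ref{thm:quotient}.

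Next we record the partial derivatives at the base point.
\begin{itemize}
\item The state derivative is $D_x\Phi(x_0,W_0)=-I_N+\beta D_*W_0$, with $D_*$ as in the proof of Proposition~\ref{prop:lift-stability}. Hyperbolicity of $x_0$ means this matrix has no eigenvalue on the imaginary axis, so in particular it is invertible.
\item The parameter derivative acts on a perturbation $\Delta W$ by $\Delta W\mapsto \beta D_*(\Delta W)x_0$. Its operator norm is at most $\beta\max_a q_a\|x_0\|_\infty\le\beta L_f$. Only a bound of this kind is needed.
\end{itemize}

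The implicit-function theorem then gives neighbourhoods $U$ of $x_0$ and $V$ of $W_0$, together with a unique $C^1$ map $W\mapsto x^*(W)$ on $V$, such that $\Phi(x^*(W),W)=0$ and $x^*(W)$ is the only zero of $\Phi(\cdot,W)$ in $U$. We restrict to row-stochastic $W$ in $V$; this is a relatively open subset of an affine subspace, so smoothness is inherited.

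The Lipschitz bound \eqref{eq:approx-equitable-bound} follows from the derivative formula
\[
 Dx^*=-[D_x\Phi]^{-1}D_W\Phi .
\]
After shrinking $V$ to a convex, compact-closure neighbourhood, continuity gives a uniform bound on this derivative, and the mean-value inequality gives $\|x^*(W)-x_0\|_\infty\le C\|W-W_0\|_\infty$.

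For stability, the Jacobian $-I_N+\beta\,\diag(f(\alpha u\1_N+\beta Wx^*(W)))\,W$ depends continuously on $W$. Its eigenvalues therefore depend continuously on $W$. If every eigenvalue at $W_0$ has negative real part, the same holds after shrinking $V$, so $x^*(W)$ is asymptotically stable by linearization.

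For the final remark, that exact equality within blocks need not survive, it is enough to exhibit a counterexample rather than prove anything general. Take a perturbation $W$ that is row-stochastic but breaks $WH=HB$, so that two nodes $i,j$ in the same block receive different aggregate input from some block whose activity differs from the others. Then generically $(Wx^*)_i\ne(Wx^*)_j$. To first order, $Dx^*\cdot\Delta W$ is not block-constant whenever $\Delta W x_0\notin\mathcal S_H$ and $D_*$ is nondegenerate, and this separates $x_i^*$ from $x_j^*$.

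The main obstacle is modest. It consists of making sure that hyperbolicity, which is not merely stability, is what guarantees invertibility, and of handling the constraint that $W$ stays row-stochastic and nonnegative. The latter costs nothing, since the implicit-function theorem holds on the full matrix space and we simply restrict the resulting map.
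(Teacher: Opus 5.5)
Your proposal is correct and follows the paper's proof. Both apply the implicit-function theorem to the equilibrium equation at $(x_0,W_0)$, use hyperbolicity for invertibility of $D_xG$, obtain the $O(\|W-W_0\|_\infty)$ bound from the $C^1$ dependence, and get stability from continuity of the spectrum. Your first-order argument for losing block constancy is more explicit than the paper's one-line remark, and it is valid because both $-I_N+\beta D_*W_0$ and the block-constant matrix $D_*$ map $\mathcal S_H$ into itself.
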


\begin{proof}
Apply the implicit-function theorem to $G(x;u,\beta,W)=0$ at $(x_0,W_0)$. Hyperbolicity makes $D_xG(x_0;u,\beta,W_0)$ invertible and gives the local $C^1$ equilibrium branch and estimate \eqref{eq:approx-equitable-bound}. Stability follows from continuity of the spectrum. Without equitability, the vector field need not preserve $\mathcal S_H$, so the continued equilibrium is generally not block constant.
\end{proof}

The persistence theorem therefore counts full-network attractors, not merely attractors within the invariant quotient. Proposition \ref{prop:approx-equitable} shows that each hyperbolic block state also survives sufficiently small non-equitable perturbations, although its node activities are no longer exactly equal within each block. This is a local continuation result near an exactly equitable matrix; it does not construct an approximately invariant quotient or provide a global reduction error bound. To determine how each product equilibrium initially changes when interblock coupling is introduced, we differentiate the quotient equilibrium equation at $\eps=0$.

\begin{proposition}[First-order displacement]
\label{prop:displacement}
Under the assumptions of Theorem \ref{thm:persistence}, define
$z_j=\alpha u+\beta s_j$. Then
\begin{equation}
 [y'_\sigma(0)]_a
 =\frac{\beta f(z_{\sigma_a})}{1-\beta f(z_{\sigma_a})}
 (Ly^0_\sigma)_a.
 \label{eq:first-displacement}
\end{equation}
\end{proposition}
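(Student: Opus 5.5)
The plan is to differentiate the quotient equilibrium identity along the $C^1$ branch from Theorem~\ref{thm:persistence} and evaluate at $\eps=0$. Along the branch, $y_\sigma(\eps)$ satisfies
\[
 0=-y_\sigma(\eps)+\cF\bigl(\alpha u\1_m+\beta B(\eps)y_\sigma(\eps)\bigr)
\]
for $0\le\eps<\eps_0$. The branch is $C^1$ by the implicit-function theorem, and $B(\eps)$ is $C^1$ with $B(0)=I_m$ and $B'(0)=L$ by \eqref{eq:weak-family}. Since $F\in C^1$, the composition is differentiable in $\eps$, and we may take one-sided derivatives at $\eps=0$. If preferred, we can extend $B$ to a $C^1$ family on a two-sided interval; the implicit-function theorem then gives a two-sided branch whose derivative agrees with the one-sided one.

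Differentiating componentwise, the $a$th component gives
\[
 0=-[y'_\sigma(0)]_a+f\bigl(\alpha u+\beta(B(0)y^0_\sigma)_a\bigr)\,\beta\bigl(B'(0)y^0_\sigma+B(0)y'_\sigma(0)\bigr)_a .
\]
At $\eps=0$ we have $(B(0)y^0_\sigma)_a=s_{\sigma_a}$, so the argument of $f$ is exactly $z_{\sigma_a}$. Also $B'(0)=L$ and $B(0)y'_\sigma(0)=y'_\sigma(0)$. The identity therefore becomes
\[
 \bigl(1-\beta f(z_{\sigma_a})\bigr)[y'_\sigma(0)]_a=\beta f(z_{\sigma_a})(Ly^0_\sigma)_a .
\]
The key point is that $B(0)=I_m$ makes the linear system for $y'_\sigma(0)$ diagonal, so the blocks decouple at first order.

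The only point needing care is dividing by $1-\beta f(z_{\sigma_a})$. This factor is positive because it equals $-d_a$, where $d_a<0$ is the diagonal Jacobian entry from the proof of Theorem~\ref{thm:persistence}; hyperbolic asymptotic stability of $s_{\sigma_a}$ guarantees this. Solving gives \eqref{eq:first-displacement}. I expect no real obstacle beyond stating the regularity of the branch, since the rest is the chain rule.
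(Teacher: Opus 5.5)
Your proposal is correct and follows the same route as the paper: differentiate the quotient equilibrium identity in $\eps$ at $\eps=0$, use $B(0)=I_m$ and $B'(0)=L$ to obtain a diagonal linear system, and divide by $1-\beta f(z_{\sigma_a})>0$, which is nonzero by hyperbolic stability. Your remarks on one-sided differentiability and on the sign of the divisor are sound and make explicit what the paper's short proof leaves implicit.
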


\begin{proof}
Differentiate
$-y+\cF(\alpha u\1_m+\beta B(\eps)y)=0$
with respect to $\eps$ at zero and solve the resulting diagonal linear system. The $a$th equation is
\[
 [-1+\beta f(z_{\sigma_a})] [y'_\sigma(0)]_a
 +\beta f(z_{\sigma_a})(Ly^0_\sigma)_a=0,
\]
which is equivalent to \eqref{eq:first-displacement}.
\end{proof}

For two symmetrically coupled blocks, $L=\left(\begin{smallmatrix}-1&1\\1&-1\end{smallmatrix}\right)$. As $\eps$ increases from zero, a low--high state initially moves inward: its low component increases and its high component decreases. Figure \ref{fig:branches} tracks this contraction and the subsequent loss of stable mixed branches.

\begin{figure}[t]
 \centering
 \includegraphics[width=\linewidth]{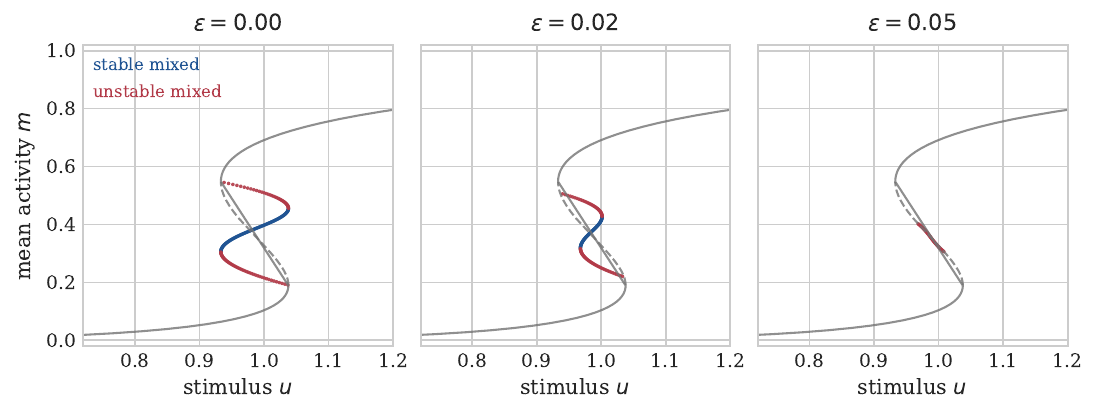}
 \caption{Two-block equilibria for the Gamma benchmark at $\beta=4$. Grey curves are the topology-independent homogeneous branch. Blue and red points are stable and unstable heterogeneous equilibria, respectively; exchanging blocks gives the reflected state with the same mean. Weak coupling compresses and narrows the stable mixed branch. At $\eps=0.05$, above the heterogeneous-cusp value, no stable mixed equilibrium remains.}
 \label{fig:branches}

  \smallskip
 \noindent\textbf{Alt text:} Three panels show two-block equilibrium branches at increasing interblock coupling. Grey curves give the homogeneous branch, while colored branches show stable and unstable mixed equilibria; as coupling increases, the mixed branch narrows and then disappears.
\end{figure}

\section{Heterogeneous cusps for two symmetrically coupled blocks}

Set
\begin{equation}
 B_\eps=
 \begin{pmatrix}
 1-\eps&\eps\\
 \eps&1-\eps
 \end{pmatrix},
 \qquad 0\le\eps\le\frac12.
 \label{eq:two-B}
\end{equation}

Here $\eps$ is the fraction of the total feedback weight that each block receives from the other block. The local fields at an equilibrium are
\[
 z_1=\alpha u+\beta\bigl((1-\eps)x_1+\eps x_2\bigr),
 \qquad
 z_2=\alpha u+\beta\bigl(\eps x_1+(1-\eps)x_2\bigr).
\]

The equilibrium equations are invariant under exchange of the two blocks. Introduce mean and contrast coordinates
\[
 m=\frac{x_1+x_2}{2},
\qquad
d=\frac{x_1-x_2}{2},
\]
and local fields
\[
 a=\frac{z_1+z_2}{2},
 \qquad
 s=\frac{z_1-z_2}{2}.
\]
For heterogeneous equilibria $s\ne0$, define
\begin{equation}
 M(a,s)=\frac{F(a+s)+F(a-s)}{2},
 \qquad
 D(a,s)=\frac{F(a+s)-F(a-s)}{2}.
 \label{eq:MD}
\end{equation}
For each heterogeneous equilibrium, $m=M(a,s)$ and $d=D(a,s)$. The corresponding stimulus $u$ and interblock coupling $\eps$ are given by
\begin{equation}
 U(a,s;\beta)=\frac{a-\beta M(a,s)}{\alpha},
 \qquad
 E(a,s;\beta)=\frac12\left(1-\frac{s}{\beta D(a,s)}\right).
 \label{eq:parameter-map}
\end{equation}
Thus $u=U(a,s;\beta)$ and $\eps=E(a,s;\beta)$, and the map
$P_\beta=(U,E):(a,s)\mapsto(u,\eps)$ parameterizes the heterogeneous equilibria. Folds are singular points of $P_\beta$, and a terminal mixed-state cusp is a Whitney cusp of this map.

\begin{theorem}[Heterogeneous cusps born from the scalar cusp]
\label{thm:heterocusp}
Assume $F\in C^7$ near $z_c$ and \eqref{eq:cusp-density}. There is $\eta>0$ such that, for every
$\beta\in(\beta_c,\beta_c+\eta)$, the parameter map $P_\beta$ has two exchange-related generic Whitney cusp points. They correspond to heterogeneous equilibria with opposite values of $d$ and map to the same parameter pair $(u_c(\beta),\eps_c(\beta))$.

Writing $\delta=\beta-\beta_c$, their asymptotics are
\begin{align}
 \eps_c(\beta)
 &=\frac{\delta}{3\beta_c}+o(\delta),
 \label{eq:eps-scaling}\\
 |x_{1,c}-x_{2,c}|
 &=2\left[-\frac{2}{f''(z_c)\beta_c^4}\right]^{1/2}
 \delta^{1/2}+o(\delta^{1/2}),
 \label{eq:contrast-scaling}\\
 u_c(\beta)-u_c
 &=-\frac{x_c}{\alpha}\delta+o(\delta).
 \label{eq:u-scaling}
\end{align}
\end{theorem}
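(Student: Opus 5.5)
The plan is to blow up the parameter map $P_\beta$ of \eqref{eq:parameter-map} near the scalar cusp at the natural scales. In the rescaled variables the map should become a smooth perturbation of an explicit polynomial map with two Whitney cusps. Stability of generic Whitney cusps under $C^3$-small perturbations then gives the points for all small $\delta>0$, and the asymptotics are read off from the unperturbed cusps.

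\emph{Step 1 (Taylor expansions).} Write $a=z_c+g$, $\beta=\beta_c+\delta$ and $k=f''(z_c)<0$. Because $M$ is even in $s$ and $D$ is odd, expand $M=F(a)+\tfrac12 f'(a)s^2+O(s^4)$ and $D/s=f(a)+\tfrac16 f''(a)s^2+O(s^4)$. Using \eqref{eq:cusp-density}, substitute $f(a)=\beta_c^{-1}+\tfrac12 kg^2+O(g^3)$ and $f'(a)=kg+O(g^2)$. This gives
\[
 E=\tfrac12\Bigl(\tfrac{\delta}{\beta_c}+\beta_c k\bigl(\tfrac{g^2}{2}+\tfrac{s^2}{6}\bigr)\Bigr)+\text{h.o.t.},
\]
\[
 \alpha U-(z_c-\beta x_c)=-\delta x_c-\tfrac{\delta}{\beta_c}g-\beta_c k\Bigl(\tfrac{g^3}{6}+\tfrac{g s^2}{2}\Bigr)+\text{h.o.t.}
\]
The first line uses $\beta f(z_c)=1+\delta/\beta_c$.

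\emph{Step 2 (rescaling).} Set $g=\delta^{1/2}G$ and $s=\delta^{1/2}S$. Subtract the shift $z_c-\beta x_c$, which produces the term $-\delta x_c$, and define the scaled map
\[
 \tilde P_\delta(G,S)=\Bigl(\delta^{-3/2}\bigl[\alpha U-z_c+\beta x_c\bigr],\ \delta^{-1}E\Bigr).
\]
Up to the constant $-\delta^{-1/2}x_c$ in the first component, which does not affect singularities, $\tilde P_\delta$ equals
\[
 \Phi(G,S)=\Bigl(-\tfrac{G}{\beta_c}-\beta_c k\bigl(\tfrac{G^3}{6}+\tfrac{GS^2}{2}\bigr),\ \tfrac12\bigl(\tfrac1{\beta_c}+\beta_c k(\tfrac{G^2}{2}+\tfrac{S^2}{6})\bigr)\Bigr)
\]
plus an $O(\delta^{1/2})$ error on compact sets. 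The error must be controlled in $C^3$, since Whitney cusps are stable under $C^3$-small perturbations. Checking that the remainders are $C^3$-small after division by $\delta^{3/2}$ is the bookkeeping step I expect to be the main obstacle. This is where $F\in C^7$ is needed: one needs the Taylor remainder of $F$ to order six with three derivatives to spare, and $D/s$ must be handled as an even smooth function of $s$.

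\emph{Step 3 (cusps of $\Phi$).} At $G=0$, $S=\pm S_*$ with $S_*^2=-2/(k\beta_c^2)$, the Jacobian of $\Phi$ has rank one:
\[
 \partial_G\Phi_1=-\beta_c^{-1}-\tfrac12\beta_c kS_*^2=0,\qquad \partial_S\Phi_1=\partial_G\Phi_2=0,\qquad \partial_S\Phi_2=\tfrac16\beta_c kS_*\neq0,
\]
and the kernel is spanned by $\partial_G$. Along the kernel, $\Phi_1$ is odd in $G$ with vanishing linear term and nonzero cubic coefficient $-\beta_c k/6$. I would verify the standard Whitney conditions: the singular set is a regular curve (here $\partial_G$ of the determinant is nonzero), the kernel is tangent to it at the point, and the cubic order of contact is nondegenerate. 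This makes $(0,\pm S_*)$ generic cusps. The reflection $S\mapsto -S$ is exactly block exchange, so both points map to the same $(u,\eps)$ and have opposite $d$. I would also check that no other cusps of $\Phi$ lie in the relevant region. By stability, $\tilde P_\delta$ then has exactly two exchange-related cusps at $(G,S)=(o(1),\pm S_*+o(1))$ for small $\delta$. One also confirms $s\ne0$ and $0<\eps<\tfrac12$.

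\emph{Step 4 (asymptotics).} At the cusp, $\eps_c=\delta\,\Phi_2(0,S_*)+o(\delta)=\tfrac12\delta\bigl(\beta_c^{-1}-\tfrac13\beta_c^{-1}\bigr)+o(\delta)=\delta/(3\beta_c)+o(\delta)$, which is \eqref{eq:eps-scaling}. Next, $x_1-x_2=2D=2s/\beta_c+o(\delta^{1/2})$ with $s=\delta^{1/2}S_*$, which gives \eqref{eq:contrast-scaling}. Finally, $\alpha U=z_c-\beta x_c+O(\delta^{3/2})=\alpha u_c-\delta x_c+o(\delta)$, which gives \eqref{eq:u-scaling}.
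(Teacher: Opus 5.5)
Your argument is correct in substance, but it takes a different route from the paper. The paper scales $a-z_c=t^2A$ (order $\delta$) and writes the cusp equations $\det DP_\beta=0$, $D_k\det DP_\beta=0$ directly. It divides out powers of $t$ to reach the regular system $(R_1,R_2)$, solves that by the implicit-function theorem, and then checks genericity with expansions whose leading terms appear only at orders $t^6$ and $t^{12}$.

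You scale $a-z_c$ like $\tau=\delta^{1/2}$ and rescale the outputs. After discarding the constant $-x_c/\tau$, the map is jointly $C^4$ in $(G,S,\tau)$ by Taylor's formula with integral remainder, so the step you flag as the main obstacle is routine. The map converges to the explicit polynomial map $\Phi$. Genericity then becomes an order-one computation on $\Phi$. Persistence is the implicit-function theorem applied to $(\lambda,\eta\lambda)$, where $\lambda=\det D\Phi$ and $\eta=(\partial_S\Phi_2,-\partial_G\Phi_2)$. Its Jacobian at $(0,\pm S_*)$ is $-\partial_S\lambda\,\partial_G(\eta\lambda)=4\beta_c k^3/27\neq0$, which also gives local uniqueness.

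The check you defer, that $\Phi$ has no other cusps off $S=0$, is immediate. The kernel is proportional to $(S,-3G)$, the fold curve $S^2-5G^2=S_*^2$ has tangent proportional to $(S,5G)$, and these are parallel only at $G=0$. As a consistency check, $\partial_S\lambda=f''(z_c)/3$ agrees with the paper's $\partial_S\widetilde\Phi_t$ up to the scale factor $t^6/\alpha$.

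What your coarser scaling loses is the order-$\delta$ location of the mean field (the paper's $A_\pm$). For you this is an $O(\delta^{1/2})$ shift in $G$, caused by the $f'''$ terms that break the $G\mapsto-G$ symmetry of $\Phi$. None of the three laws needs it, and the $C^1$ dependence on $\tau$ sharpens each remainder by a factor $\delta^{1/2}$.

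Two statements in Step 3 are wrong as written, although the conclusions survive.

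First, $\lambda=-\tfrac{k}{6}S+\tfrac{1}{12}\beta_c^2k^2S(5G^2-S^2)$, so $\partial_G\lambda=0$ at $G=0$. That vanishing is exactly the tangency of $\ker D\Phi=\operatorname{span}\{\partial_G\}$ to the fold curve. Had $\partial_G\lambda$ been nonzero, the point would be a fold, not a cusp. Regularity of the fold set comes instead from $\partial_S\lambda=k/3\neq0$.

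Second, the cubic coefficient cannot be read off along the straight line $S=S_*$. Because $\partial_G^2\Phi_2=\beta_c k/2\neq0$, the Lyapunov--Schmidt reduction must follow the level curve $\Phi_2=\Phi_2(0,S_*)$, namely $S^2=S_*^2-3G^2$. On that curve $\Phi_1=\tfrac43\beta_c kG^3$, which has the opposite sign to your $-\beta_c k/6$. Equivalently, $\eta(\eta\lambda)=\beta_c^4k^4S_*^3/27\neq0$.

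A minor point: the constant whose subtraction produces $-\delta x_c$ in Step 1 is $z_c-\beta_c x_c$, not $z_c-\beta x_c$. Step 4 already uses the correct relation.
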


Thus, as $\beta\downarrow\beta_c$, the interblock coupling at the heterogeneous cusps and the displacement of the stimulus both vanish linearly in $\beta-\beta_c$, whereas the difference between the two block activities vanishes as $(\beta-\beta_c)^{1/2}$.

\begin{proof}
Let
\[
 \Phi(a,s;\beta)=\det D_{(a,s)}P_\beta(a,s).
\]
At a fold, a kernel vector of $DP_\beta$ is
$k=(E_s,-E_a)^T$. The cusp equations are
\begin{equation}
 \Phi=0,
 \qquad
 \Psi:=D_k\Phi=0.
 \label{eq:cusp-equations}
\end{equation}
The functions in \eqref{eq:parameter-map} have removable singularities at $s=0$ after writing $D(a,s)/s$ as its even Taylor expansion.

Put
\begin{equation}
 \beta=\beta_c+t^2,
 \qquad
 a=z_c+t^2A,
 \qquad
 s=tS.
 \label{eq:blowup}
\end{equation}
Direct Taylor expansion of \eqref{eq:cusp-equations}, followed by division by nonzero smooth factors and the lowest powers of $t$, produces a smooth equivalent system
\begin{align}
 R_1(A,S,t)
 &=\frac{1}{\beta_c}+\frac{\beta_c f''(z_c)}{2}S^2+O(t)=0,
 \label{eq:R1}\\
 R_2(A,S,t)
 &=A+\frac{f'''(z_c)}{8f''(z_c)}S^2+O(t)=0.
 \label{eq:R2}
\end{align}
At $t=0$, the two solutions are
\begin{equation}
 S_\pm
 =\pm\left[-\frac{2}{\beta_c^2f''(z_c)}\right]^{1/2},
 \qquad
 A_\pm=-\frac{f'''(z_c)}{8f''(z_c)}S_\pm^2.
 \label{eq:leading-AS}
\end{equation}
The Jacobian of $(R_1,R_2)$ with respect to $(S,A)$ has determinant
$\beta_c f''(z_c)S_\pm\ne0$. The implicit-function theorem therefore gives two branches for all sufficiently small $t>0$. They are exchanged by $s\mapsto-s$.

The singularity has rank one for small $t>0$. To test its order invariantly, compose the parameter map with the blow-up transformation in \eqref{eq:blowup}, writing $\widetilde P_t(A,S)=P_{\beta_c+t^2}(z_c+t^2A,tS)$ and $\widetilde\Phi_t=\det D\widetilde P_t$. If $\widetilde k=(E_S,-E_A)^T$ spans the kernel of $D\widetilde P_t$ along the fold, direct expansion on the cusp branch gives
\begin{equation}
 D_{\widetilde k}^2\widetilde\Phi_t
 =\frac{5\beta_c^4[f''(z_c)]^4S_\pm^3}{216\alpha}t^{12}+O(t^{13}),
 \label{eq:genericity-expansion}
\end{equation}
which is nonzero. The same expansion gives $\partial_S\widetilde\Phi_t=[f''(z_c)/(3\alpha)]t^6+O(t^7)$ on the cusp branch, so the fold set is regular. The blow-up coordinate change is invertible for each $t>0$, so cusp type is preserved. Together with rank one and first directional degeneracy, \eqref{eq:genericity-expansion} is the Whitney-cusp nondegeneracy condition \citep{kuznetsov2004}. Hence both branches consist of generic cusps.

It remains to recover the parameter values and equilibrium activities. From \eqref{eq:MD},
\[
 \frac{D(a,s)}{s}
 =f(z_c)+\frac{f''(z_c)}{6}s^2+o(t^2).
\]
Using \eqref{eq:R1} and $f(z_c)=1/\beta_c$ in \eqref{eq:parameter-map} yields
$\eps_c=t^2/(3\beta_c)+o(t^2)$. Also
$d=D(a,s)=s/\beta_c+o(t)$, which gives \eqref{eq:contrast-scaling}. Finally,
$M(a,s)=x_c+f(z_c)t^2A+o(t^2)$, and the $A$ terms cancel in $a-\beta M$, giving
$U-u_c=-(x_c/\alpha)t^2+o(t^2)$. Since $t^2=\delta$, \eqref{eq:eps-scaling}--\eqref{eq:u-scaling} follow.
\end{proof}

These cusp bifurcations do not lie on the homogeneous branch. They mark the disappearance of stable unequal-activity states as interblock coupling increases and converge to the scalar cusp as $\beta\downarrow\beta_c$.

\section{Numerical results}
\label{sec:numerics}

The computations were deterministic. Whenever possible, we parameterized equilibrium branches by the local fields, which allowed us to trace them through folds. We determined stability from the eigenvalues of the exact Jacobian. We located the heterogeneous cusp by solving the fold and cusp conditions and checked its genericity independently using Lyapunov--Schmidt coefficients. We also used symbolic algebra to verify the Taylor-expansion coefficients used to derive the three scaling laws in Theorem \ref{thm:heterocusp}.

We used a gamma distribution with shape $4$ and scale $0.8$ as the primary benchmark and set $\alpha=1$. This is the same smooth unimodal family used to illustrate threshold heterogeneity in the scalar setting. Table \ref{tab:numerics} reports the scalar cusp and the two-block cusp computed at $\beta=4$.

\begin{table}[t]
\centering
\caption{Numerical benchmark for the Gamma threshold law.}
\label{tab:numerics}
\begin{tabular}{@{}lll@{}}
\toprule
Quantity & Value & Role \\
\midrule
$z_c$ & $2.400000$ & density mode \\
$x_c$ & $0.352768$ & scalar cusp activity \\
$\beta_c$ & $3.570762$ & scalar critical feedback \\
$u_c$ & $1.140349$ & scalar critical stimulus \\
$\eps_c$ at $\beta=4$ & $0.036750$ & mixed-state cusp coupling \\
$u_c(4)$ & $0.985362$ & mixed-state cusp stimulus \\
$(x_{1,c},x_{2,c})$ & $(0.542897,0.186896)$ & cusp block activities \\
Cubic coefficient & $32.762588$ & cusp nondegeneracy \\
Unfolding determinant & $2.355585$ & parameter transversality \\
\bottomrule
\end{tabular}
\end{table}

Figure \ref{fig:foldset} shows the two fold curves bounding the mixed equilibria at $\beta=4$. The curves meet at a heterogeneous cusp of the type described in Theorem \ref{thm:heterocusp}. At the computed cusp, we obtained an equilibrium residual of $3.33\times10^{-16}$, a Jacobian determinant of magnitude $2.36\times10^{-17}$, and a quadratic cusp coefficient of magnitude $1.98\times10^{-15}$. The nonzero cubic coefficient and nonsingular two-parameter unfolding reported in Table \ref{tab:numerics} provided an independent numerical check of genericity.

\begin{figure}[t]
 \centering
 \includegraphics[width=0.62\linewidth]{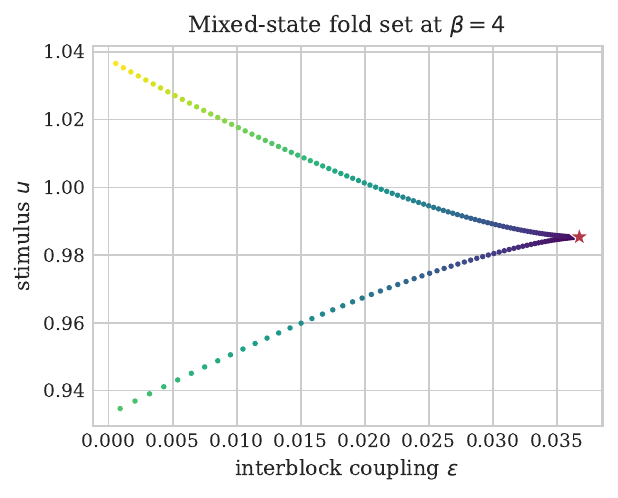}
 \caption{Mixed-state fold set for the Gamma benchmark at $\beta=4$. Colour records block contrast $|x_1-x_2|$; the star is the generic heterogeneous cusp at $(\eps,u)=(0.036750,0.985362)$.}
 \label{fig:foldset}

 \smallskip
 \noindent\textbf{Alt text:} A two-curve fold set is plotted in the plane of interblock coupling and stimulus, with color indicating the difference between the two block activities. The two fold curves meet at a starred heterogeneous cusp.
\end{figure}

To test the asymptotic coefficients as well as the exponents, we tracked the cusp over the range $10^{-5}\le(\beta-\beta_c)/\beta_c\le0.3$. Figure \ref{fig:scaling} shows each computed quantity divided by its leading asymptotic formula; all three ratios approach one as $\beta\downarrow\beta_c$. Log--log fits over relative distances no greater than $0.02$ gave exponents $0.9989$, $0.4991$, and $1.0003$ for the gamma distribution. The logistic and normal threshold distributions gave the same limiting exponents within $1.3\times10^{-3}$.

\begin{figure}[t]
 \centering
 \includegraphics[width=\linewidth]{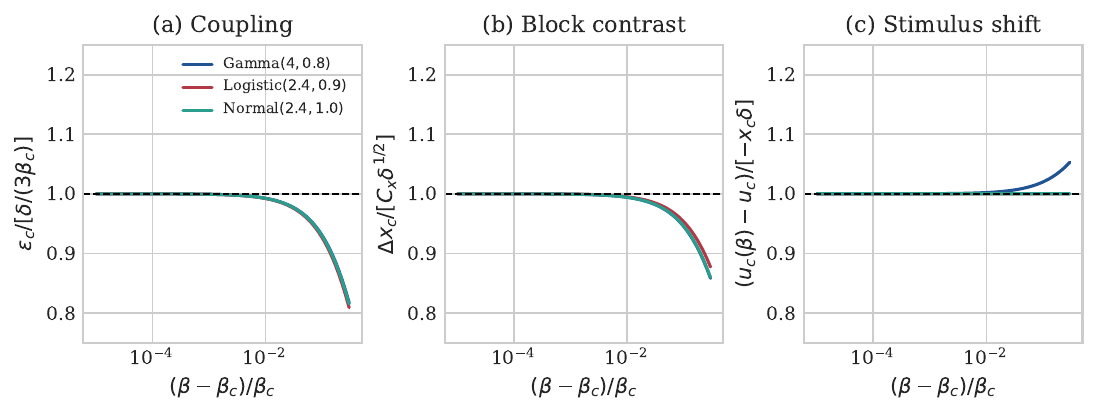}
 \caption{Coefficient-level tests of Theorem \ref{thm:heterocusp}. Each ordinate is the computed value divided by the corresponding leading asymptotic formula. The dashed line is one. Here $\Delta x_c=|x_{1,c}-x_{2,c}|$ and $C_x=2[-2/(f''(z_c)\beta_c^4)]^{1/2}$.}
 \label{fig:scaling}

 \smallskip
 \noindent\textbf{Alt text:} Three panels show ratios between computed cusp quantities and their leading asymptotic formulas as feedback approaches the scalar critical value. The ratios for coupling, block contrast, and stimulus shift all approach one for gamma, logistic, and normal threshold distributions.
\end{figure}

Figure \ref{fig:counts} shows the results for a symmetric family in which the total interblock weight $\eps$ was distributed equally among the other blocks. At $\eps=0$, scalar bistability gives exactly $2^m$ stable product equilibria. We tracked these equilibria as $\eps$ increased and recorded the successive losses of stable branches. For $m=3$, we also followed all eight branches down to $\eps=10^{-4}$; every branch was stable, and the maximum error in the derivative predicted by \eqref{eq:first-displacement} decreased linearly with $\eps$. Theorem \ref{thm:persistence} guarantees the count only for sufficiently weak coupling. The computations at larger values of $\eps$ illustrate the loss of the tracked branches but do not constitute an exhaustive search for all equilibria.

\begin{figure}[t]
 \centering
 \includegraphics[width=0.68\linewidth]{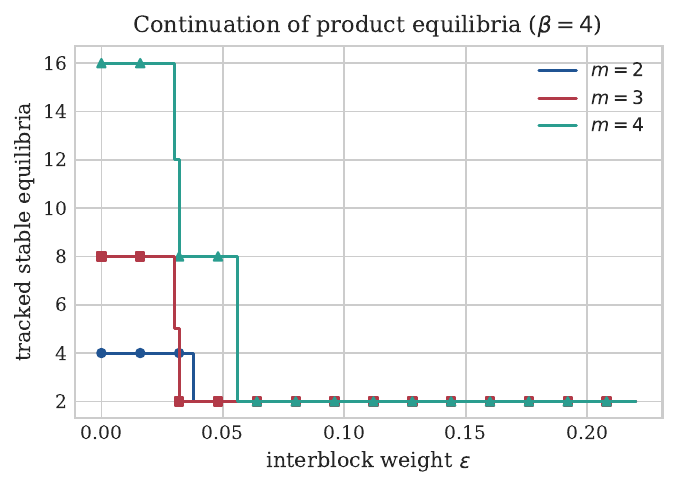}
 \caption{Stable product-state branches tracked for $m=2,3,4$ blocks at $\beta=4$ and the midpoint of the scalar bistable interval. The initial counts are $2^m$, as guaranteed by Theorem \ref{thm:persistence}; stronger interblock coupling removes heterogeneous branches until only the two homogeneous attractors remain among the tracked states.}
 \label{fig:counts}

 \smallskip
 \noindent\textbf{Alt text:} A step plot shows the number of tracked stable product equilibria versus interblock coupling for systems with 2, 3, and 4 blocks. Each curve starts at $2^m$ stable equilibria when coupling is zero and then drops in steps until only two homogeneous equilibria remain.
\end{figure}

To examine departures from exact equitability, we perturbed the fifteen-node equitable matrix used in Figure \ref{fig:quotient} toward a fixed strictly positive row-stochastic matrix. The initial low--high--low block equilibrium was hyperbolic and stable. Figure \ref{fig:approx-equitable} shows its displacement against $\|W-W_0\|_\infty$ and its departure from block constancy against the equitability defect $\|WH-H\widehat B\|_\infty$, where $\widehat B=(H^TH)^{-1}H^TWH$ is the least-squares quotient. Both relationships were linear to leading order; fits over the six smallest perturbations gave slopes $1.034$ and $0.988$, respectively. All computed equilibria remained stable. These computations illustrate the local persistence result in Proposition \ref{prop:approx-equitable}; they do not establish an approximate quotient or a global robustness threshold.

\begin{figure}[t]
 \centering
 \includegraphics[width=\linewidth]{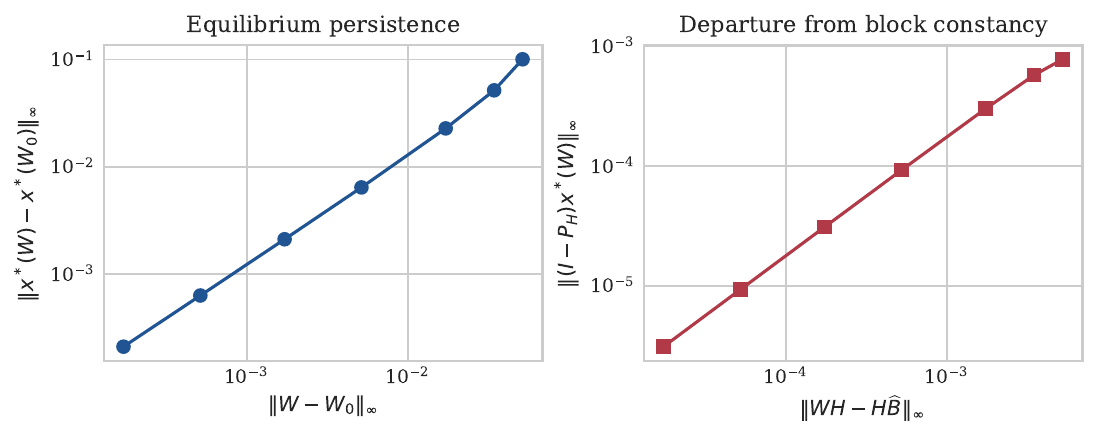}
\caption{Persistence of an equilibrium under deterministic perturbations of a fifteen-node network that initially has three equitable blocks. (a) The equilibrium displacement is linear to leading order in the matrix perturbation. (b) Within-block dispersion is linear to leading order in the equitability defect; $P_H=H(H^TH)^{-1}H^T$ is the block-constant projector.}
 \label{fig:approx-equitable}

 \smallskip
 \noindent\textbf{Alt text:} Panel (a) plots equilibrium displacement against the size of a matrix perturbation on logarithmic axes and shows an approximately linear trend. Panel (b) plots within-block dispersion against the equitability defect, also on logarithmic axes, and again shows an approximately linear trend, illustrating persistence under small non-equitable perturbations.
\end{figure}

\section{Discussion}

The threshold distribution and the network structure play different roles. If $\beta\sup_z f(z)<1$, every row-stochastic network has a unique globally attracting equilibrium. At stronger feedback, the threshold distribution determines the scalar cusp, while the curvature of its density at the maximum determines the local scaling near that cusp. Row-stochastic coupling leaves the homogeneous branch unchanged and cannot make a heterogeneous mode unstable before the homogeneous mode. Its effects appear away from that branch, where weak coupling between equitable blocks preserves a combinatorial family of stable equilibria.

Several ingredients of our analysis are known separately. Equitable partitions have been used to reduce consensus and synchronization models \citep{schaub2016,siddique2018}. Network threshold models have been used to study cascades \citep{watts2002,wiedermann2020}, and weakly coupled cusp systems can support multiple attractors \citep{izhikevich1998}. Our contribution is to connect these ideas within Kaye's threshold model. We give a contraction condition valid for every row-stochastic network, show that the homogeneous mode loses stability before any heterogeneous mode, prove the existence of $r^m$ stable full-network equilibria under weak interblock coupling, and derive the location and block-activity contrast of the heterogeneous cusps.

Exact equitability is a mathematical assumption that makes the quotient reduction exact; we do not assume that empirically identified groups satisfy it. Network symmetries can produce equitable partitions in empirical networks \citep{macarthur2008}, but useful coarse partitions are often only approximately equitable \citep{kate2009,squillace2026}. Nathe et al. identify nearly equitable ``dynamical communities'' in weighted technological, biological, and social networks and show that they predict synchronized behaviour more effectively than conventional community-detection methods \citep{nathe2022}. Proposition \ref{prop:approx-equitable} gives a more limited result: a hyperbolic equilibrium associated with an exact equitable partition persists under sufficiently small perturbations of the interaction matrix, although node activities need no longer be equal within each block and the exact quotient reduction is lost.

The cusp asymptotics quantify how much interblock coupling an unequal-activity state can withstand near the onset of scalar bistability. If $\delta=\beta-\beta_c$, the coupling at the heterogeneous cusp is of order $\delta$, whereas the difference between the block activities is of order $\delta^{1/2}$. Thus the critical coupling approaches zero faster than the activity difference: close to $\beta_c$, the two blocks can still have visibly different activity levels even though a small increase in interblock coupling makes that state disappear.

The assumptions limit the scope of the results. Row-stochasticity fixes the total incoming feedback weight at one for every node, allowing the distribution of that weight among neighbours to vary without changing its total magnitude. If the row sums differ, the homogeneous set need not remain invariant and topology-dependent instabilities may occur. Equitable partitions give exact quotient systems. Proposition \ref{prop:approx-equitable} shows that individual hyperbolic equilibria persist under small violations of equitability, but it does not give an approximately invariant quotient or uniform error bounds along a complete bifurcation diagram. Finally, \eqref{eq:network-model} is a deterministic response model and is not derived from a stochastic network of binary threshold agents. Such a derivation would require a separate limit theorem.

Two questions remain. First, can uniform error bounds be obtained for approximately equitable networks away from bifurcations? Second, how do asymmetric quotient matrices change the pair of heterogeneous cusps found for symmetric two-block coupling? Both questions preserve the distinction established here: the threshold mechanism produces homogeneous hysteresis, while the network structure supports additional heterogeneous stable states.

\section{Conclusion}

We studied how Kaye's threshold mechanism changes when feedback is transmitted through a row-stochastic network. If $\beta\sup_z f(z)<1$, the network has a unique globally attracting equilibrium. For every feedback strength, the homogeneous dynamics reproduce Kaye's scalar equation, so the network changes neither its folds nor its cusp, and no heterogeneous mode becomes unstable first. Equitable partitions add a different effect. When the scalar equation has several stable equilibria, independent assignments of those equilibria to the blocks persist under sufficiently weak interblock coupling and are stable in the full network. For two symmetrically coupled blocks, branches with unequal activities terminate at heterogeneous cusp bifurcations. Near the onset of scalar bistability, the coupling at these cusps, the difference between the block activities, and the shift in the external stimulus obey the scaling laws derived in Theorem \ref{thm:heterocusp}. These results distinguish the homogeneous hysteresis inherited from Kaye's model from the additional multistability supported by the equitable block structure.

\section*{Competing interests}

The author declares no competing interests.

\appendix

\section{Explicit fold and cusp equations}
\label{app:fold-cusp}

For completeness, write
\[
 A=\frac{f(a+s)+f(a-s)}2,
 \qquad
 C=\frac{f(a+s)-f(a-s)}2,
\]
\[
 P=\frac{f'(a+s)+f'(a-s)}2,
 \qquad
 Q=\frac{f'(a+s)-f'(a-s)}2.
\]
Removing the nonzero denominator from $\det DP_\beta$ gives the fold equation
\begin{equation}
 R(a,s;\beta)
 =\beta s C^2-(1-\beta A)(D-sA)=0.
 \label{eq:fold-numerator}
\end{equation}
Its derivatives are
\begin{align}
 R_a&=2\beta sCQ+\beta P(D-sA)-(1-\beta A)(C-sP),\\
 R_s&=\beta C^2+2\beta sCP+\beta Q(D-sA)+(1-\beta A)sQ.
\end{align}
Since a kernel direction is proportional to $(D-sA,sC)$, the cusp equation used in the computations is
\begin{equation}
 (D-sA)R_a+sCR_s=0.
 \label{eq:cusp-numerator}
\end{equation}
Equations \eqref{eq:fold-numerator} and \eqref{eq:cusp-numerator} were solved for $(a,s)$ at fixed $\beta$, after which \eqref{eq:parameter-map} was used to recover $(u,\eps)$.

\section{Independent Lyapunov--Schmidt check}
\label{app:LS}

Let $\mathcal G(x,u,\eps)=0$ denote the two-block equilibrium equation, and let $A=D_x\mathcal G$ at the candidate cusp. Choose a right null vector $v$ with $\|v\|_2=1$ and $v_1>0$, and scale a left null vector $w$ so that $w^Tv=1$. With $\mathcal B=D_x^2\mathcal G$ and $\mathcal C=D_x^3\mathcal G$, the cusp conditions and cubic coefficient are
\[
 \det A=0,
 \qquad
 w^T\mathcal B[v,v]=0,
\]
\begin{equation}
 c=w^T\bigl(\mathcal C[v,v,v]+3\mathcal B[v,h_{20}]\bigr),
 \qquad
 Ah_{20}=-\mathcal B[v,v],
 \quad w^Th_{20}=0.
 \label{eq:cubic-coefficient}
\end{equation}
For a parameter $p\in\{u,\eps\}$, let
\[
 Ah_p=-(I-vw^T)\mathcal G_p,
 \qquad w^Th_p=0,
\]
and define
\[
 g_p=w^T\mathcal G_p,
 \qquad
 g_{sp}=w^T\bigl(D_x\mathcal G_p\,v+\mathcal B[v,h_p]\bigr).
\]
The unfolding is transversal when
\begin{equation}
 \det
 \begin{pmatrix}
 g_u&g_\eps\\
 g_{su}&g_{s\eps}
 \end{pmatrix}
 \ne0.
 \label{eq:unfolding}
\end{equation}
For the benchmark cusp, $c=32.762588$ and the determinant in \eqref{eq:unfolding} is $2.355585$.

\bibliographystyle{unsrtnat}
\bibliography{references}

\end{document}